\definecolor{headerblue}{RGB}{0,102,204}
\definecolor{rowblue}{RGB}{230,240,255}
\newcolumntype{S}{>{\small}l}
\newtheorem{thm}{Theorem}[section]
\newtheorem{theorem}[thm]{Theorem}
\theoremstyle{definition}
\theoremstyle{remark}
\newtheorem{remark}[thm]{Remark}
\numberwithin{equation}{section}
\DeclareMathSymbol{\subsetneqq}{\mathbin}{AMSb}{36}
\begin{document}

\begin{center}
{\bf\Large Dynamics of Unemployment with Discouraged Workers: A Nonlinear Mathematical Model}  \\
{\bf Poushali Das$^{a}$, Shraddha Sachan$^{a,\orcidlink{0009-0003-6118-099X}}$, Mukul Chauhan$^{a, \orcidlink{0009-0009-1332-3005}}$, Narendra Kumar$^{b, \orcidlink{0000-0003-0647-6600}}$,\\ Amit K. Verma$^{a,\orcidlink{0000-0001-8768-094X}}$}
\end{center}
\vspace{0.1in}
{\small
\textit{$^{a}$Department of Mathematics, Indian Institute of Technology Patna, Bihta, Patna 801103, (BR) India.} \\
\textit{$^{b}$Department of Mathematics, Indian Institute of Technology Jodhpur, IIT Jodhpur, Karwar, Rajasthan, India.}
}
\vspace{2mm}\\
E-mail: \texttt{poushalidas.acad@gmail.com};\\ \texttt{shraddhasachan8@gmail.com};\\ \texttt{mukul\_2421ma13@iitp.ac.in};\\
\texttt{narendrakumar.knp@gmail.com};\\ \texttt{akverma@iitp.ac.in}
\vspace{0.1cm}\\
$^{\star}$Corresponding author: \texttt{akverma@iitp.ac.in}

\parindent=0mm \vspace{.2in}
{\small {\bf Abstract} In this article, we formulate and analyze a new non-linear mathematical model to describe the dynamics of unemployment with a discouraged working population. We consider five dynamic variables, namely, unskilled unemployed individuals, skilled unemployed individuals, discouraged individuals, employed persons, and job vacancies. Furthermore, we determine the equilibrium points of the dynamic system and investigate their local stability. To demonstrate the results, we conduct numerical simulations by presenting solution trajectories and analyzing how variations in key parameters influence the states of the dynamical variables. 
	
	\parindent=0mm \vspace{.1in}
	{\bf{Keywords}} Mathematical model; Unemployment; Local stability; Employment; Routh-Hurwitz; Numerical simulation
	
	\parindent=0mm \vspace{.1in}
	{\bf {Mathematics Subject Classification:}} 97M20, 34A34, 34C28}
    \section{Introduction}\label{S1}

    \parindent=0mm \vspace{.1in}
     Unemployment is one of the most serious social and economic problems faced by countries across the world. It affects economic stability, government finances, and social welfare systems, particularly in economic structures that rely heavily on public support. In simple terms, unemployment refers to a situation where people who are able and willing to work cannot find suitable jobs. In recent decades, unemployment has increased due to rapid population growth, technological advances such as automation, artificial intelligence \cite{mutascu2021artificial}, and major health crises, including the COVID-19 pandemic \cite{blustein2020unemployment}. Apart from economic difficulties, unemployment also creates serious social problems. It can negatively affect mental health, increase stress and depression, and may even lead to higher levels of criminal behaviour \cite{kidwai2015psychological,murphy1999effect}.

   \parindent=8mm \vspace{.1in}
   Motivated by these concerns, considerable attention has been devoted to the development of mathematical models to understand the dynamics of unemployment and to evaluate the effectiveness of policy interventions. One of the early contributions in this direction was made by Nikolopoulos and Tzanetis \cite{nikolopoulos2003model}, who proposed a mathematical framework for housing allocation to homeless families following natural disasters. Building upon this idea, Misra and Singh \cite{misra2011mathematical} introduced a nonlinear unemployment model incorporating three dynamical variables: unemployed individuals, regularly employed individuals, and temporarily employed individuals. 
   
   \parindent=8mm \vspace{.1in}
   Subsequently, Misra et al. \cite{misra2013delay} extended this framework by incorporating time delays and demonstrated that while the creation of new vacancies can reduce unemployment, delays in vacancy generation may cause instability of the system.  Sirghi et al. \cite{sirghi2014dynamic} further expanded the model by adding two new variables that describe the number of jobs available in the market and the number of immigrants. Subsequent studies by Pathan and Bhathawala \cite{pathan2015mathematical, pathan2016unemployment} examined the roles of self-employment and examined job competition between unemployed individuals and new immigrants, while Munoli and Gani \cite{munoli2016optimal} introduced an optimal control approach to assess policy-driven unemployment reduction. More recently, Neamtu \cite{harding2018dynamic} developed a dynamic unemployment model incorporating migration and delayed policy intervention, formulated with five interacting state variables.

\parindent=8mm \vspace{.1in}
Subsequently, Munoli et al. \cite{munoli2017mathematical} introduced a more comprehensive model involving unemployed individuals, temporary and regular employment, and corresponding vacancies. Through optimal control analysis, they demonstrated that targeted government interventions can significantly mitigate unemployment. Similarly, Pathan and Bhathawala \cite{pathan2017mathematical} investigated a four-variable nonlinear model and concluded that prompt job creation by both public and private sectors plays a crucial role in reducing unemployment levels.

\parindent=8mm \vspace{.1in}
Galindro and Torres \cite{galindro2017simple} applied a mathematical unemployment model to the Portuguese labour market, incorporating unemployed individuals, employed individuals, and vacancies, and proposed an associated optimal control strategy. Pathan and Bhathawala \cite{pathan2017mathematical1} analysed job competition between native unemployed individuals and migrants, accounting for delayed and non-delayed vacancy creation as well as self-employment initiatives.

\parindent=8mm \vspace{.1in}
The relationship between unemployment and crime has been widely studied. Sundar et al. \cite{sundar2018does} developed a nonlinear model incorporating unemployment, employment, criminal activity, jail, and detention classes, demonstrating that increased employment opportunities reduce criminal behaviour. Maalwi et al. \cite{al2018unemployment} proposed a resource-constrained unemployment model tailored to economically weaker countries, demonstrating how employment-related parameters influence the reduction of unemployment.

\parindent=8mm \vspace{.1in}
More recent studies have focused on skill development and labour market efficiency. Singh et al. \cite{singh2020combating} investigated the impact of skill enhancement on unemployment using a five-variable model, concluding that increased skill development reduces unemployment while boosting employment. Sheikh et al. \cite{al2021mathematical1} and Misra et al. \cite{misra2022modeling} further analysed the effects of limited job opportunities and skill development, respectively, on unemployment dynamics.

\parindent=8mm \vspace{.1in}
Additional contributions include models incorporating recruitment delays \cite{verma2021getting,verma2023modeling}, training programmes \cite{al2021mathematical}, and industrialization \cite{singh2023modeling}. Ashi et al. \cite{ashi2022study} examined unemployment control strategies in the absence of intervention. In contrast, Mamo et al. \cite{mamo2024modelling} investigated the interplay between corruption, economic growth, and unemployment, emphasizing the importance of anti-corruption measures for economic stability.

\parindent=8mm \vspace{.1in}
Most recently, Misra et al. \cite{misra2025graph} developed a three-class unemployment model distinguishing between unskilled unemployed, skilled unemployed, and employed individuals, and extended their analysis using graph-theoretic techniques to capture employment diffusion through social networks. Their results highlighted the critical role of connectivity and reproduction thresholds in sustaining employment growth. In a related study, Ifeacho and Parra \cite{ifeacho2025mathematical} proposed a nonlinear model to examine the coupled dynamics of economic growth, corruption, unemployment, and inflation, with particular emphasis on the influence of inflation on unemployment.

\parindent=8mm \vspace{.1in}
To address the problem of unemployment and encourage people the Government of India is trying to train the youth through various programs like PMKVY (Pradhan Mantri Kaushal Vikas Yojana), Samarth (Scheme For Capacity Building In Textile Sector), CTS (Craftsmen Training Scheme), SANKALP (Skill Acquisition and Knowledge Awareness for Livelihood Promotion), NAPS (National Apprenticeship Promotion Scheme), JSS (Jan Shikshan Sansthan), PMKK (Pradhan Mantri Kaushal Kendra), PMVBRY (Pradhan Mantri Viksit Bharat Rojgar Yojana), NCS (National Career Service), Pradhan Mantri Ujjwala Yojana (PMUY), AVTS (Advanced Vocational Training Scheme), CITS (Crafts Instructor Training Scheme), Skill Loan Scheme, PMVK (PM Vishwakarma Yojana), TITP (Technical Intern Training Program), NSDM (National Skill Development Mission). All these programs help people who are in discouraged classes to revive themselves and do better in life, achieving happiness. 

\parindent=8mm \vspace{.1in}
 In this work, we have proposed a new non-linear mathematical model of unemployment with five dynamical variables: 
\begin{enumerate}[(i)]
    \item Set of unskilled unemployed people ($U_u$), 
    \item Set of skilled unemployed people ($U_s$), 
    \item People employed people ($E$), 
    \item Set of discouraged people ($D$), 
    \item No. of vacancies ($V$). 
\end{enumerate}
We have formulated the non-linear model, calculated the region of attraction, and obtained critical points of the system. Moreover, we have analysed the stability of the critical points by computing eigenvalues and using the Routh-Hurwitz criterion. Finally, we have performed a numerical simulation using the fourth-order Runge-Kutta method to validate our obtained results. In our view, this is a novel concept that introduces a class of ``discouraged people". This concept originates from individuals who experience dissatisfaction with their employment or overall life conditions, either because competitive pressures and structural constraints prevent them from achieving desired goals, or because population growth and labour market competition result in unemployment. Discouraged workers are those who are willing and able to work but have withdrawn from active job search due to persistent rejection, limited employment opportunities, inadequate skills, or heightened market competition. In this framework, discouragement denotes a condition in which an individual's level of well-being falls below their expected standard.

\parindent=8mm \vspace{.1in}
This article is organized as follows. Section~\ref{S3} formulates the model. Section~\ref{S4} derives the equilibrium points, and Section~\ref{S5} examines their stability. Numerical simulations illustrating the analytical results are presented in Section~\ref{S6}. The conclusions are given in Section~\ref{S7}.
\section{The Proposed Mathematical Model}\label{S3}
In the mathematical modeling process, we assume that the total number of unskilled unemployed persons is increasing at a constant rate $\Gamma$. Furthermore, we assume that all populations entering the system are at least 18 years of age. Additionally, we are considering the following assumptions:
\begin{enumerate}[(a)]

    \item The rate at which unskilled unemployed individuals develop skills is proportional to the total unskilled unemployed and the skilled unemployed populations.
    \item The rate at which unskilled unemployed persons move to the discouraged class is proportional to the unskilled unemployed and discouraged populations.
    \item The rate of movement of the skilled unemployed class to the discouraged class is very low, so we are neglecting that parameter.
\end{enumerate}
We describe the model in the following Fig.~\ref{fig:figure1}
\begin{figure}[H]
    \centering
    \begin{tikzpicture}[
    node distance=3.2cm,
    every node/.style={font=\small},
    comp/.style={
        ellipse,
        draw=black,
        thick,
        minimum width=1.8cm,
        minimum height=1.2cm,
        shading=ball
    },
    arrow/.style={->, thick}
]
         \node[comp, ball color=blue!45](Uu) at (0,0) {$U_{u}(t)$};
         \node[comp, ball color=orange!50](Us) at (5,0) {$U_{s}(t)$};
         \node[comp, ball color=red!55, below=3cm of Uu](D){$D(t)$};
         \node[comp, ball color=green!40 , below=3cm of Us](E){$E(t)$};
         \node[comp, ball color=yellow!50 , below right=1.5cm of Us, xshift=13mm](V){$V(t)$};
         \draw[->, thick, draw=blue!70!black](Uu.north) -- ++(0,1) node[midway,right]{$\alpha_{1}U_{u}(t)$};
        \draw[->, thick, draw=blue!70!black] 
    (-1.8,0) -- (Uu.west) 
    node[midway,above]{$\Gamma$};
          \draw[->, thick, draw=blue!70!black](Us.north) -- ++(0,1) node[midway,right]{$\alpha_{2} U_{s}(t)$};
         \draw[->, thick, draw=blue!70!black](D.south) -- ++(0,-1) node[midway,right]{$\alpha_{4} D(t)$};
         \draw[->, thick, draw=blue!70!black](E.south) -- ++(0,-1) node[midway,right]{$\alpha_{3} E(t)$};
         \draw[->, thick, draw=blue!70!black](Uu.east) -- (Us.west) node[midway,above]{$\beta_{1}U_{u}(t)U_{s}(t)$};
         \draw[->, thick, draw=blue!70!black](Uu.south) -- (D.north) node[midway,left]{$\beta_{2}U_{u}(t)D(t)$};
         \draw[->, thick, draw=blue!70!black](D.north east) -- (Us.south west) node[midway, sloped, above, fill=white, inner sep=1pt] {$\beta_{6}U_{s}(t)D(t)$};
         \draw[->, thick, draw=blue!70!black]([yshift=-3pt]D.east)--([yshift=-3pt]E.west) node[midway,below]{$\beta_{5}D(t)V(t)$};
         \draw[<-, thick, draw=blue!70!black]([yshift=3pt]D.east)--([yshift=3pt]E.west) node[midway,above]{$\beta_{4}E(t)$};
         \draw[->, thick, draw=blue!70!black](Us.south)--(E.north) node[midway,right]{$\beta_{3}U_{s}(t)V(t)$};
          \draw[->, thick, draw=blue!70!black](V.south) -- ++(0,-1) node[pos=1,above,yshift=-16pt] {$\delta V(t)$};
         \draw[
          <-, thick, draw=blue!70!black](V.north) -- ++(0,1) node[pos=1,above,yshift=1pt] {$\phi U_{s}(t)$};
     \end{tikzpicture}
    \caption{Schematic diagram of the model}
   \label{fig:figure1}
\end{figure}
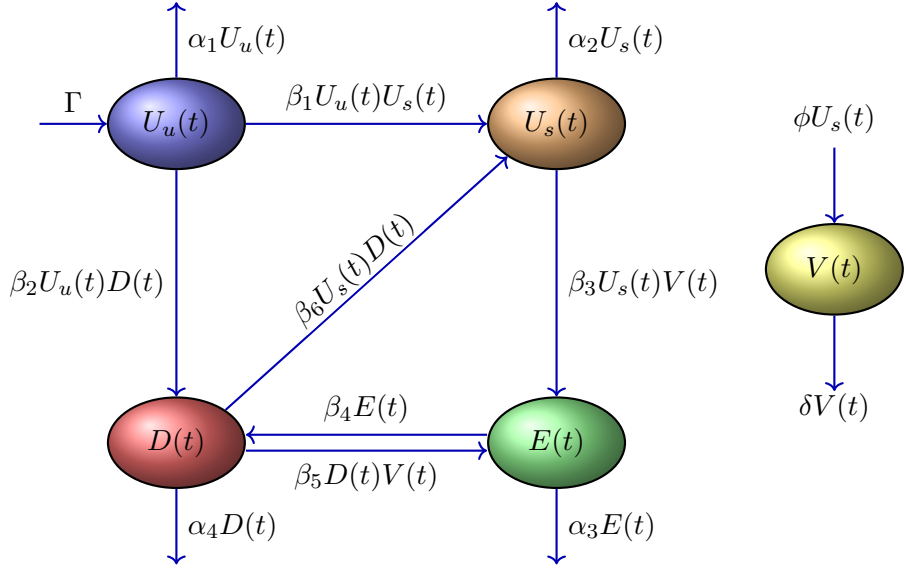
The model dynamics can be defined by the following system of differential equations.
    \begin{align}
        \dfrac{dU_{u}(t)}{dt}&=\Gamma-\beta_{1}U_{u}(t)U_{s}(t)-\beta_{2}U_{u}(t)D(t)-\alpha_{1}U_{u}(t)\nonumber\\
        \dfrac{dU_{s}(t)}{dt}&= \beta_{1}U_{u}(t)U_{s}(t)+\beta_{6}U_{s}(t)D(t)-\beta_{3}U_{s}(t) V(t)-\alpha_{2}U_{s}(t)\nonumber\\
        \dfrac{dE(t)}{dt}&=\beta_{3}U_{s}(t) V(t)+\beta_{5}D(t)V(t)-\beta_{4}E(t)-\alpha_{3}E(t)\label{Eq1}\\
        \dfrac{dD(t)}{dt}&=\beta_{2}U_{u}(t)D(t)-\beta_{6}U_{s}(t)D(t)-\beta_{5}D(t)V(t)+\beta_{4}E(t)-\alpha_{4}D(t)\nonumber\\
        \dfrac{dV(t)}{dt}&=\phi U_{s}(t) -\delta V(t),\nonumber
    \end{align}
with the initial conditions, $U_{u}(0)=U_{u_{0}}\ge0$, $U_{s}(0)=U_{s_{0}}\ge0$, $E(0)=E_{0}\ge0$, $D(0)=D_{0}\ge0$, $V(0)=V_{0}\ge0$. The variables and parameters are justified in the following table:

\begin{longtable}{p{3.5cm} p{11cm}}
\caption*{\MakeUppercase{Nomenclature}}\\
\hline
\textbf{Notation} & \textbf{Description} \\
\hline
$U_{u}(t)$ & Total number of unskilled and unemployed person at time $t$\\
$U_{s}(t)$ & Total number of skilled and unemployed person at time $t$\\
$D(t)$ & Total number of discouraged population at time $t$\\
$E(t)$ & Total number of employed person at time $t$ \\
$V(t)$ & Total number of vacancies at time $t$\\ 
$\alpha_{1}$ & Death rate of unskilled unemployed individuals\\
$\alpha_{2}$ & Death rate of skilled unemployed individuals\\
$\alpha_{3}$ & Death rate of employed individuals\\
$\alpha_{4}$ & Death rate of discouraged individuals\\ 
$\beta_{1}$& The rate at which unskilled unemployed individuals are moving towards the skilled unemployed class\\
$\beta_{2}$& The rate at which the total number of unskilled unemployed individuals is moving to the discouraged class \\
$\beta_{3}$&  The rate at which skilled individuals are being employed \\
$\beta_{4}$& The rate at which employed individuals are losing their jobs and moving to the discouraged class  \\
$\beta_{5}$& The rate at which discouraged individuals are moving to the employed class  \\
$\beta_{6}$& The rate at which discouraged individuals are moving to the skilled unemployed class\\
$\phi$& The rate at which vacancies are created for skilled unemployed individuals\\
$\delta$& The diminution rate of vacancies due to lack of funds\\ 
\hline
\end{longtable}

Subsequently, we prove the following theorem, which establishes the non-negativity and boundedness of the solutions of model \eqref{Eq1}.

\begin{theorem}
    The set 
    \begin{align*}
        \mathcal{S}=&\left\{\left(U_{u}(t),U_{s}(t),E(t), D(t),V(t) \right):0\leq U_{u}(t)+U_{s}(t)+E(t)+ D(t)\le \dfrac{\Gamma}{\gamma}; 0\le V\le \dfrac{\Gamma \phi}{\delta\gamma}  \right\},
    \end{align*}
    where $\gamma=\min\{\alpha_{1},\alpha_{2},\alpha_{3},\alpha_{4}\}$ is a region of attraction for the model in system \eqref{Eq1} and attracts all solutions initiated in the interior of the positive octant.
\end{theorem}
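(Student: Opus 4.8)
The plan is to prove the two assertions of the theorem — non-negativity (positive invariance of the octant) and boundedness — separately, and then assemble them into the invariance of $\mathcal{S}$. Since the right-hand sides of \eqref{Eq1} are polynomials, they are locally Lipschitz, so solutions exist and are unique; this justifies every flow argument below.

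\textbf{Non-negativity.} First I would show that the closed positive octant is positively invariant. The cleanest device is to read each equation as linear in its own variable, with the remaining variables treated as coefficients, and invoke an integrating-factor representation. For instance, $U_s$ solves $\dot U_s=(\beta_1 U_u+\beta_6 D-\beta_3 V-\alpha_2)U_s$, whence
\[
U_s(t)=U_s(0)\exp\!\left(\int_0^t(\beta_1 U_u+\beta_6 D-\beta_3 V-\alpha_2)\,ds\right)\ge 0 .
\]
Analogous representations hold for the other four components, and the corresponding forcing terms $\Gamma$, $\beta_3 U_s V+\beta_5 D V$, $\beta_4 E$, and $\phi U_s$ are all non-negative, forcing $U_u,E,D,V\ge 0$. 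Because the equations are in fact coupled, the rigorous formulation is a boundary (tangency) argument: on each face of the octant the vector field points inward or is tangent, since $\dot U_u=\Gamma>0$ on $\{U_u=0\}$, $\dot E\ge 0$ on $\{E=0\}$, $\dot D\ge 0$ on $\{D=0\}$, $\dot V\ge 0$ on $\{V=0\}$, and $\{U_s=0\}$ is invariant. Hence no trajectory starting in the interior can exit the octant.

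\textbf{Boundedness of the human population.} Next I would set $N(t)=U_u(t)+U_s(t)+E(t)+D(t)$ and add the first four equations of \eqref{Eq1}. Every transfer term cancels in pairs (the $\beta_i$ terms merely move individuals between compartments), leaving
\[
\frac{dN}{dt}=\Gamma-\alpha_1 U_u-\alpha_2 U_s-\alpha_3 E-\alpha_4 D .
\]
Since $\gamma=\min\{\alpha_1,\alpha_2,\alpha_3,\alpha_4\}$, the sink terms dominate $\gamma N$ from below, giving the differential inequality $\dot N\le \Gamma-\gamma N$. A standard comparison argument (or explicitly solving the majorizing linear ODE) yields $N(t)\le \tfrac{\Gamma}{\gamma}+\bigl(N(0)-\tfrac{\Gamma}{\gamma}\bigr)e^{-\gamma t}$, so $N(t)\le \Gamma/\gamma$ whenever $N(0)\le\Gamma/\gamma$, and $\limsup_{t\to\infty}N(t)\le\Gamma/\gamma$ in every case. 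Then, using $U_s(t)\le N(t)\le \Gamma/\gamma$, the last equation gives $\dot V=\phi U_s-\delta V\le \tfrac{\phi\Gamma}{\gamma}-\delta V$, and the same comparison estimate produces $V(t)\le \tfrac{\Gamma\phi}{\delta\gamma}$. Collecting the bounds shows $\mathcal{S}$ is positively invariant and attracts all interior trajectories.

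\textbf{Main obstacle.} I expect the only genuine subtlety to lie in the non-negativity step: because the five equations are coupled one cannot literally freeze coefficients, so the invariance claim must rest on the tangency argument, and the delicate point is ruling out escape through an edge or corner where several variables vanish at once. The boundedness estimates are routine once the cancellation in $\dot N$ is observed; the only structural constraint is the ordering — the bound on $N$ (hence on $U_s$) must be established before the estimate for $V$ can be closed.
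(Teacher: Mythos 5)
Your proof follows essentially the same route as the paper: summing the first four equations so the transfer terms cancel, deriving $\dot N\le\Gamma-\gamma N$, and then closing the bound on $V$ via $\phi U_s\le\phi\Gamma/\gamma$, with the comparison/limit-superior step identical in substance. You are in fact slightly more complete than the paper, which silently assumes non-negativity of the components (needed to pass from $-\sum\alpha_i X_i$ to $-\gamma N$) whereas you supply the tangency argument on the boundary of the octant explicitly.
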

\begin{proof}
    From \eqref{Eq1} we have 
    \begin{equation}
        \notag
        \begin{aligned}
            \dfrac{d}{dt}\left\{U_{u}(t)+U_{s}(t)+E(t)+D(t)\right\}&=\Gamma-\alpha_{1}U_{u}(t)-\alpha_{2}U_{s}(t)-\alpha_{3}E(t)_-\alpha_4D(t)\\
            &\leq \Gamma-\gamma\left(U_{u}(t)+U_{s}(t)+E(t)+D(t) \right),
        \end{aligned}
    \end{equation}
     where $\gamma=\min\{\alpha_{1},\alpha_{2},\alpha_{3},\alpha_{4}\}$. Taking the limit supremum, we have
     \begin{equation}
         \notag
         \begin{aligned}
             \limsup\limits_{t\to\infty}\left\{ \Gamma-\gamma\left(U_{u}(t)+U_{s}(t)+E(t)+D(t) \right)\right\}\ge0.
         \end{aligned}
     \end{equation}
     This implies 
     \begin{equation}
         \notag
         \begin{aligned}
             \limsup\limits_{t\to\infty}\left\{ \left(U_{u}(t)+U_{s}(t)+E(t)+D(t) \right)\right\}\le \dfrac{\Gamma}{\gamma}.
         \end{aligned}
     \end{equation}
     Also from equation \eqref{Eq1} we have 
     \begin{equation}
         \notag
         \begin{aligned}
             \dfrac{dV}{dt}&=\phi U_{s}-\delta V\\
                           &\leq \dfrac{\Gamma \phi}{\gamma}-\delta V.
         \end{aligned}
     \end{equation}
     Taking the limit supremum, we have
     \begin{equation}
         \notag
          \limsup\limits_{t\to\infty}\left\{\dfrac{\Gamma \phi}{\gamma}-\delta V  \right\}\geq0.
     \end{equation}
     This implies,
     \begin{equation}
         \notag
         \begin{aligned}
             \limsup\limits_{t\to\infty} ~ V\leq \dfrac{\Gamma \phi}{\gamma\delta}.
         \end{aligned}
     \end{equation}
     This completes the proof.
\end{proof}

\section{Equilibrium Analysis}\label{S4}
In this section, we obtain equilibrium points by equating the right-hand side of the equation \eqref{Eq1} with zero. 
\begin{subequations}
\begin{align}
     \Gamma-\beta_{1}U_{u}U_{s}-\beta_{2}U_{u}D-\alpha_{1}U_{u}&=0, \label{Eq2}\\
     \beta_{1}U_{u}U_{s}+\beta_{6}U_{s}D-\beta_{3}U_{s} V-\alpha_{2}U_{s}&=0,\label{Eq3}\\
     \beta_{3}U_{s} V+\beta_{5}DV-\beta_{4}E-\alpha_{3}E&=0,\label{Eq4}\\
     \beta_{2}U_{u}D-\beta_{6}U_{s}D-\beta_{5}DV+\beta_{4}E-\alpha_{4}D&=0,\label{Eq5}\\
     \phi U_{s}-\delta V&=0.\label{Eq6}
\end{align}
\end{subequations}
From the equation \eqref{Eq2} we obtain
\begin{equation}
    U_{u}=\dfrac{\Gamma}{\alpha_{1}+\beta_{1} U_{s}+\beta_{2}D}\label{Eq7}.
\end{equation}
From equation \eqref{Eq6} we have 
\begin{equation}
    V=\dfrac{\phi U_{s} }{\delta}.\label{Eq8}
\end{equation}
From equations \eqref{Eq4} and \eqref{Eq8} we have
\begin{equation}
    \phi\left(\beta_{3}U_{s}^2+\beta_5 D U_{s} \right)-\delta(\beta_{4}+\alpha_{3})E=0.\label{Eq9}
\end{equation}
From equation \eqref{Eq3} we obtain 
$$U_{s} \left(\beta_{1}U_{u}+\beta_{6}D-\beta_{3} V-\alpha_{2}\right)=0.$$
This implies either
$$U_{s} =0,$$
or 
$$\beta_{1}U_{u}+\beta_{6}D-\beta_{3} V-\alpha_{2}=0.$$
{\it Case-I} If $U_{s}=0$ (i.e., no one is present in the skilled unemployed class). Then, from equations \eqref{Eq8} and \eqref{Eq9} we obtain
$$V=0 \quad\text{and} \quad E=0.$$
From equation equation \eqref{Eq5} we have
$$D(\beta_{2}U_{u}-\alpha_{4})=0.$$
This implies either
$$D=0,$$
or
$$U_{u}=\dfrac{\alpha_{4} }{\beta_{2}}.$$
If $$D=0,$$ then we get from \eqref{Eq7}
$$U_{u}=\dfrac{\Gamma}{\alpha_{1} }.$$
Therefore, one critical point is $$\left(U_{u},U_{s},E,D,V \right)=\left(\dfrac{\Gamma}{\alpha_{1}},0,0,0,0 \right). $$
\\If 
$$U_{u}=\dfrac{\alpha_{4} }{\beta_{2}},$$
then from equation \eqref{Eq7} we have
$$D=\dfrac{\Gamma\beta_{2}-\alpha_{1}\alpha_{4}}{\alpha_{4}\beta_{2}}. $$
Another critical point is 
$$\left(U_{u},U_{s},E,D,V \right)=\left(\dfrac{\alpha_{4}}{\beta_{2}},0,0,\dfrac{\Gamma\beta_{2}-\alpha_{1}\alpha_{4}}{\alpha_{4}\beta_{2}},0 \right), $$ and it will exist if $\Gamma\beta_{2}-\alpha_{1}\alpha_{4}>0.$
\\\textit{Case-II} If $U_{s}\neq0$, then from \eqref{Eq3}, we have
\begin{equation}
    \beta_{1}U_{u}+\beta_{6}D-\beta_{3} V-\alpha_{2}=0,\label{Eq10}
\end{equation}
Since $U_{s}\neq0$, then from equations \eqref{Eq6} and \eqref{Eq9} we can say that $V\neq0$ and $E\neq0$. Also $E\neq0$, then by using \eqref{Eq5} we have $D\neq0$ and hence, from \eqref{Eq7} $U_u\neq0$.
\\Therefore, from equations \eqref{Eq7}, \eqref{Eq8}, and \eqref{Eq10} we have
     \begin{align}
         0&=\dfrac{\beta_{1}\delta\Gamma}{\alpha_{1}+\beta_{1}U_{s}+\beta_{2}D}+\beta_{6}\delta D-\beta_{3}\phi U_{s}-\alpha_{2}\delta \notag\\
         &=\beta_{1}\delta\Gamma+ \beta_{6}\delta D\left(\alpha_{1}+\beta_{1}U_{s}+\beta_{2}D \right)-\beta_{3}\phi U_{s}\left(\alpha_{1}+\beta_{1}U_{s}+\beta_{2}D \right)-\alpha_{2}\delta\left(\alpha_{1}+\beta_{1}U_{s}+\beta_{2}D \right)\notag\\
        &= \delta\left(\beta_{1}\Gamma-\alpha_{1}\alpha_{2} \right)- \left(\alpha_{1}\beta_{3}\phi+\alpha_{2}\beta_{1}\delta \right)U_s+\delta\left(\alpha_{1}\beta_{6}-\alpha_{2}\beta_{2} \right)D +\left(\beta_{1}\beta_{6}\delta-\beta_{2}\beta_{3}\phi \right)U_{s}D \notag\\
        &~~~~+\beta_{2}\beta_{6}\delta D^2 -\phi\beta_{1}\beta_{3}U_{s} ^2.\label{Eq11}
     \end{align}
Also, substituting the values of \eqref{Eq7}, \eqref{Eq8}, and \eqref{Eq9} in \eqref{Eq5}, we have
     \begin{align}
         0=\dfrac{\beta_{2}\Gamma\delta D}{\alpha_{1}+\beta_{1}U_{s}+\beta_{2}D}-(\delta\beta_{6}+\phi\beta_{5})U_{s}D+\dfrac{\beta_{4}\phi U_{s}}{\alpha_{3}+\beta_{4}}(\beta_{3}U_{s}+\beta_{5}D)-\alpha_{4}\delta D.\label{Eq12}
     \end{align}
     Multiplying both side of the equation \eqref{Eq12} by $\left(\alpha_{1}+\beta_{1}U_{s}+\beta_{2}D\right)$ and $\left(\alpha_{3}+\beta_{4}\right)$ we obtain 
    \begin{align}
0 = & -\left\{ \alpha_{4}\beta_{2}\delta(\alpha_{3}+\beta_{4})+\beta_{2}\left(\beta_{4}\beta_{6}\delta+\alpha_{3}\beta_{6}\delta+\alpha_{3}\beta_{5}\phi \right)U_{s} \right\} D^{2} \notag \\
& + \big[ \delta\left(\beta_{2}\Gamma-\alpha_{1}\alpha_{4} \right)(\alpha_{3}+\beta_{4}) 
    - \left\{ \alpha_{1}\left(\beta_{4}\beta_{6}\delta+\alpha_{3}\beta_{6}\delta+\alpha_{3}\beta_{5}\phi \right)
    + \alpha_{4}\beta_{1}\delta(\alpha_{3}+\beta_{4}) \right\}U_{s}  \notag \\
&  - \left\{ \beta_{1}\left(\beta_{4}\beta_{6}\delta+\alpha_{3}\beta_{6}\delta+\alpha_{3}\beta_{5}\phi \right)-\beta_{2}\beta_{3}\beta_{4}\phi \right\}U_{s}^{2}~ \big]D + \alpha_{1}\beta_{3}\beta_{4}\phi U_{s}^{2}+\beta_{1}\beta_{3}\beta_{4}\phi U_{s}^{3}.\label{Eq13}
\end{align}
  The above equation \eqref{Eq13} can be written as
\begin{equation}
\label{Eq14}
    AD^{2}+BD+C=0,
\end{equation}
where $$
\begin{aligned}
A &= -\left\{ 
    \alpha_{4}\beta_{2}\delta(\alpha_{3}+\beta_{4})
    + \beta_{2}\left(\beta_{4}\beta_{6}\delta+\alpha_{3}\beta_{6}\delta+\alpha_{3}\beta_{5}\phi \right)U_{s} 
    \right\}, \\
 B &= \delta\left(\beta_{2}\Gamma-\alpha_{1}\alpha_{4} \right)(\alpha_{3}+\beta_{4})
    - \left\{\alpha_{1}\left(\beta_{4}\beta_{6}\delta+\alpha_{3}\beta_{6}\delta+\alpha_{3}\beta_{5}\phi \right)+\alpha_{4}\beta_{1}\delta(\alpha_{3}+\beta_{4}) 
      \right\}U_{s} \\
  &\quad 
    - \left\{\beta_{1}\left(\beta_{4}\beta_{6}\delta+\alpha_{3}\beta_{6}\delta+\alpha_{3}\beta_{5}\phi \right)
    -\beta_{2}\beta_{3}\beta_{4}\phi 
      \right\}U_{s}^{2},\\
    C&= \alpha_{1}\beta_{3}\beta_{4}\phi U_{s}^{2}+\beta_{1}\beta_{3}\beta_{4}\phi U_{s}^{3}.
\end{aligned}
$$
Then,
$$D=\dfrac{-B\pm \sqrt{B^{2}-4AC}}{2A}. \text{ (Provided $A\neq 0$)}$$
Substituting the value of $D$ in the equation \eqref{Eq11} we obtain 
\begin{equation}
\label{Eq15}
\begin{aligned}
    0=&\left\{\beta_{2}\beta_{6}\delta\left(B^{2}-2AC\right)+(\beta_{1}\beta_{6}\delta-\beta_{2}\beta_{3}\phi)ABU_{s}-2A^{2}\beta_{1}\beta_{3}\phi U_{s}^2+\delta\left(\alpha_{1}\beta_{6}-\alpha_{2}\beta_{2} \right)AB\right.\\
    &\left.-2A^2\left(\alpha_{1}\beta_{3}\phi+\alpha_{2}\beta_{1}\delta \right)U_{s}+2A^2\delta(\beta_{1}\Gamma-\alpha_{1}\alpha_{4})\right\}^2+\left\{\beta_{2}\beta_{6}\delta B+(\beta_{1}\beta_{6}\delta-\beta_{2}\beta_{3}\phi)AU_{s}\right.\\
    &\left.\left(\alpha_{1}\beta_{6}-\alpha_{2}\beta_{2} \right)A \right\}^2 \left(4AC-B^{2}\right).
\end{aligned}
\end{equation}
Let $$\begin{aligned}
        p&=\delta\left(\alpha_{3}+\beta_{4}\right),\\
        q&=\beta_{4}\beta_{6}\delta+\alpha_{3}\beta_{6}\delta+\alpha_{3}\beta_{5}\phi.
\end{aligned}$$
Then,
$$\begin{aligned}
    A&=-\left\{\alpha_{4}\beta_{2}p+\beta_{2}q U_{s} \right\},\\
    B&=p\left(\Gamma\beta_{2}-\alpha_{1}\alpha_{4} \right) -\left(\alpha_{1}q+\alpha_{4}\beta_{1}p \right)U_{s}-\left(\beta_{1}q-\beta_{2}\beta_{3}\beta_{4}\phi \right)U_{s}^{2},\\
    C&=\beta_{3}\beta_{4}\phi\left(\alpha_{1}+\beta_{1}U_{s} \right) U_{s}^{2}.
\end{aligned}$$
Therefore, equation \eqref{Eq15} can be written as 
\begin{equation}\label{Eq16}
    a_{0}U_{s}^{8}+a_{1}U_{s}^{7}+a_{2}U_{s}^{6}+a_{3}U_{s}^{5}+a_{4}U_{s}^{4}+a_{5}U_{s}^{3}+a_{6}U_{s}^{2}+a_{7}U_{s}+a_{8}=0,
\end{equation}
   where the coefficients $a_{0}$ and $a_{8}$  are given by
       \begin{align*}
a_{0}=&-8q^2\beta_{1}\beta_{2}^{3}\beta_{3}\phi\left(q+\beta_{4}\beta_{6}\delta\right)\left(\beta_{1}\beta_{6}\delta-\beta_{2}\beta_{3}\phi\right)\left(q\beta_{1}-\beta_{2}\beta_{3}\beta_{4}\phi \right),\\
a_{8}=&4p^{4}\alpha_{4}^{2}\beta_{2}^{3}\delta^{2}\left(\Gamma\beta_{1}-\alpha_{1}\alpha_{2}\right)\left(\Gamma^{2}\beta_{2}^{2}\beta_{6}+\Gamma\alpha_{2}\alpha_{4}\beta_{2}^2+\Gamma\alpha_{4}^{2}\beta_{1}\beta_{2}+2\alpha_{1}^{2}\alpha_{4}^{2}\beta_{6}-2\alpha_{1}\alpha_{2}\alpha_{4}^{2}\beta_{2}\right.\\
&\left.-3\Gamma\alpha_{1}\alpha_{4}\beta_{2}\beta_{6} \right).
\end{align*}
If the product of the leading coefficient and the constant term of \eqref{Eq16} is negative, then a positive root of \eqref{Eq15} must exist. If \eqref{Eq16} has a positive root (say $U_{s}^{*}$), then $A<0$ and $C>0$, hence $B^{2}-4AC>0$. Therefore, there will exist a positive $D$ (say $D^{*}$), which will be of the form 
   $$D^{*}=\dfrac{-B+\sqrt{B^{2}-4AC}}{2A}.$$
   Using this values of $U_{s}^{*}$ and $D^{*}$ we will obtain positive values of $U_{u}$, $V$, and $E$ from equations \eqref{Eq7}, \eqref{Eq8}, and \eqref{Eq9} respectively. Let us denote them by $U_{u}^{*}$, $V^{*}$, and $E^{*}$. Therefore, the critical points are 
   \begin{equation}
       \notag
       \begin{aligned}
           P_{0}&=\left(\dfrac{\Gamma}{\alpha_{1}},0,0,0,0 \right),\\
           P_{1}&=\left(\dfrac{\alpha_{4}}{\beta_{2}},0,0,\dfrac{\Gamma\beta_{2}-\alpha_{1}\alpha_{4}}{\alpha_{4}\beta_{2}},0 \right), \text{ exist if $\beta_{2}\Gamma-\alpha_{1}\alpha_{4}>0$},\\
           P_{2}&=\left(U_{u}^{*},U_{s}^{*},E^{*},D^{*}, V^{*} \right), \text{ exist if the product of the leading coefficient}\\
           &~~~~~~~~~~~~~~~~~~~~~~~~~~~~~~~~~
           \text{and the constant term of \eqref{Eq16} is negative.}
       \end{aligned}
   \end{equation}
   \begin{remark}
From equation \eqref{Eq16}, we obtain an $8^{\text{th}}$-degree polynomial equation
\begin{align*}
F(U_s) = 0,
\end{align*}
where
\begin{align*}
F(U_s) = a_0 U_s^8 + a_1 U_s^7 + a_2 U_s^6 + a_3 U_s^5 + a_4 U_s^4 + a_5 U_s^3 + a_6 U_s^2 + a_7 U_s + a_8 .
\end{align*}
Evaluating at $U_s = 0$ gives
\begin{align*}
F(0) = a_8.
\end{align*}
Moreover, $F(U_s) \to +\infty$ as $U_s \to \infty$ if $a_0 > 0$, and $F(U_s) \to -\infty$ if $a_0 < 0$. Hence, if $a_8$ and $a_0$ have opposite signs, then by the Intermediate Value Theorem, $F(U_s)$ admits at least one positive real root.
\end{remark}

\section{Stability Analysis}\label{S5}
In this section, we investigate the local stability of the equilibrium points. In order to check the local stability of the critical point, we compute the following variational matrix:
\scriptsize{\begin{equation}
    \label{Eq17}
    M=\begin{pmatrix}
        -\left(\alpha_{1}+\beta_{1} U_{s}+\beta_{2} D \right) & -\beta_{1}U_{u} & 0& -\beta_{2}U_{u} &0\\[2ex]
        \beta_{1} U_{s} & -(\alpha_{2}+\beta_{3}V)+\beta_{1}U_{u}+\beta_{6}D &0 &\beta_{6}U_{s}  &-\beta_{3}U_{s}\\[2ex]
        0 & \beta_{3} V & -(\alpha_{3}+\beta_{4}) & \beta_{5}V & \beta_{3}U_{s}+\beta_{5}D\\[2ex]
        \beta_{2} D &-\beta_{6} D &\beta_{4} &-(\alpha_{4}+\beta_{5}V+\beta_{6}U_{s})+\beta_{2}U_{u} &-\beta_{5}D &\\[2ex]
        0 & \phi & 0 & 0 & -\delta
    \end{pmatrix}.
\end{equation}}
\normalsize
\begin{enumerate}
    \item[$(i)$] At the point $P_{0}=\left(\dfrac{\Gamma}{\alpha_{1}},0,0,0,0 \right)$.
\par The corresponding variational matrix is given by
\begin{equation}
    \notag
    M_{P_{0}}=\begin{pmatrix}
        -\alpha_{1} & -\dfrac{\Gamma \beta_{1}}{\alpha_{1}} & 0 & -\dfrac{\Gamma \beta_{2}}{\alpha_{1}} & 0\\[2ex]
        0&\dfrac{\Gamma \beta_{1}-\alpha_{1}\alpha_{2}}{\alpha_{1}} &0 &0 &0 \\[2ex]
        0 &0 &  -(\alpha_{3}+\beta_{4})&0 &0\\[2ex]
        0 & 0 & \beta_{4} & \dfrac{\Gamma\beta_{2}-\alpha_{1}\alpha_{4}}{\alpha_{1}} &0\\[2ex]
        0 &\phi &0 &0 &-\delta
    \end{pmatrix}.
\end{equation}
The eigenvalues of the matrix  $M_{P_{0}}$ are given by
$$\lambda^{0}_{0}=-\alpha_{1},~ \lambda^{0}_{1}=-(\alpha_{3}+\beta_{4}),~\lambda^{0}_{2}=\dfrac{\Gamma\beta_{1}-\alpha_{1}\alpha_{2}}{\alpha_{1}},~\lambda^{0}_{3}=\dfrac{\Gamma\beta_{2}-\alpha_{1}\alpha_{4}}{\alpha_{1}},~\lambda^{0}_{4}=-\delta.$$
If both $\Gamma\beta_{1}-\alpha_{1}\alpha_{2}$ and $\Gamma\beta_{2}-\alpha_{1}\alpha_{4}$ are negative, then the critical point $P_{0}$ is stable.
\item[$(ii)$] At the point  $ P_{1}=\left(\dfrac{\alpha_{4}}{\beta_{2}},0,0,\dfrac{\Gamma\beta_{2}-\alpha_{1}\alpha_{4}}{\alpha_{4}\beta_{2}},0 \right)$, exist if $\beta_{2}\Gamma-\alpha_{1}\alpha_{4}>0$.
\par The corresponding variational matrix is given by 
\footnotesize{\begin{equation}
    \notag
    M_{P_{1}}=\begin{pmatrix}
         -\alpha_{1}-\dfrac{\Gamma\beta_{2}-\alpha_{1}\alpha_{4}}{\alpha_{4}} & -\dfrac{\alpha_{4}\beta_{1}}{\beta_{2}} &0 &-\alpha_{4} &0\\[2ex]
         0& -\alpha_{2}+\dfrac{\alpha_{4}\beta_{1}}{\beta_{2}}+\dfrac{\beta_{6}(\Gamma\beta_{2}-\alpha_{1}\alpha_{4})}{\alpha_{4}\beta_{2}} & 0 & 0 &0\\[2ex]
         0 & 0 &-(\alpha_{3}+\beta_{4}) &0 & \dfrac{\beta_{5}(\Gamma\beta_{2}-\alpha_{1}\alpha_{4})}{\alpha_{4}\beta_{2}}\\[2ex]
         \dfrac{\Gamma\beta_{2}-\alpha_{1}\alpha_{4}}{\alpha_{4}} & -\dfrac{\beta_{6}(\Gamma\beta_{2}-\alpha_{1}\alpha_{4})}{\alpha_{4}\beta_{2}} &\beta_{4} &0 &-\dfrac{\beta_{5}(\Gamma\beta_{2}-\alpha_{1}\alpha_{4})}{\alpha_{4}\beta_{2}}\\[2ex]
         0 &\phi &0 &0 &-\delta
    \end{pmatrix}.
\end{equation}}
\normalsize
 The eigenvalues are given by
\begin{equation}
    \notag
    \begin{aligned}
        \lambda^{1}_{0}&=-(\alpha_{3}+\beta_{4}), ~ \lambda_{1}^{1}=-\delta,~ \lambda_{2}^{1}=-\alpha_{2}+\beta_{6}x+y, ~\lambda_{3}^{1}=-\frac{1}{2}\left(\alpha_{1}+\beta_{2}x+\sqrt{\left(\alpha_{1}+\beta_{2}x\right)^2-4\alpha_{4}\beta_{2}x} \right),\\
        \lambda_{4}^{1}&=-\frac{1}{2}\left(\alpha_{1}+\beta_{2}x-\sqrt{\left(\alpha_{1}+\beta_{2}x\right)^2-4\alpha_{4}\beta_{2}x} \right),
    \end{aligned}
\end{equation}
where $x=\dfrac{\Gamma\beta_{2}-\alpha_{1}\alpha_{4} }{\alpha_{4}\beta_{2}}$ and $y=\dfrac{\alpha_{4}\beta_{1}}{\beta_{2}}$. Therefore, the critical point $P_{1}$ is stable if $\lambda_{2}^{1}$ is negative.
\item[$(iii)$]  At the point $ P_{2}=\left(U_{u}^{*}, U_{s}^{*}, D^{*}, E^{*}, V^{*} \right)$, exists if the product of the leading coefficient and the constant term in \eqref{Eq16} is negative.
\\The corresponding variational matrix is given by
\tiny{\begin{equation}
\notag
    M_{P_{2}}=\begin{pmatrix}
        -\left(\alpha_{1}+\beta_{1} U_{s}^{*}+\beta_{2} D^{*} \right) & -\beta_{1}U_{u}^{*} & 0& -\beta_{2}U_{u}^{*} &0\\[2ex]
        \beta_{1} U_{s}^{*} & -(\alpha_{2}+\beta_{3}V^{*})+\beta_{1}U_{u}^{*}+\beta_{6}D^{*} &0 &\beta_{6}U_{s}^{*}  &-\beta_{3}U_{s}^{*}\\[2ex]
        0 & \beta_{3} V^{*} & -(\alpha_{3}+\beta_{4}) & \beta_{5}V^{*} & \beta_{3}U_{s}^{*}+\beta_{5}D^{*}\\[2ex]
        \beta_{2} D^{*} &-\beta_{6} D^{*} &\beta_{4} &-(\alpha_{4}+\beta_{5}V^{*}+\beta_{6}U_{s}^{*})+\beta_{2}U_{u}^{*} &-\beta_{5}D^{*} \\[2ex]
        0 & \phi & 0 & 0 & -\delta
    \end{pmatrix}.
\end{equation}}
\normalsize
The characteristic equation of the matrix $M_{P_{2}}$ is given by
\begin{equation}
\label{Eq18}
    \eta^{5}+c_{1}\eta^{4}+c_{2}\eta^{3}+c_{3}\eta^{2}+c_{4}\eta+c_{5}=0,
\end{equation}
where
    \begin{align}
        c_{1}=&x+y+w+\alpha_{3}+\beta_{4}+\delta-\beta_{1}U_{u}^{*}-\beta_{2}U_{u}^{*}-\beta_{6}D^{*},\nonumber\\
        c_{2}=& xy - \beta_{1} U_{u}^{*}(x+y)  + \beta_{1}^{2} U_{s}^{*} U_{u}^{*} -x \beta_{2} U_{u}^{*}  +\beta_{1} \beta_{2}\left(U_{u}^{2}\right)^{*}+\beta_{2}^{2}U_{u}^{*}D^{*} 
        - \beta_{4} \beta_{5} V^{*} - x \beta_{6} D^{*} -y \beta_{6} D^{*} \nonumber  \\
& + \beta_{2} \beta_{6} D^{*} U_{u}^{*} + \beta_{6}^{2}D^{*} U_{s}^{*} 
+ x \delta + y \delta -\beta_{1} \delta U_{u}^{*} 
-\beta_{2} \delta U_{u}^{*}  -\beta_{6} \delta D^{*} \nonumber \\
& + w (x + y - \beta_{2}U_{u}^{*}  + \delta)
+ p \left(w + x + y -\beta_{1} U_{u}^{*}  -\beta_{2} U_{u}^{*} 
-\beta_{6} D^{*}  + \delta\right)
+ \beta_{3} \phi U_{s}^{*} ,\nonumber\\
c_{3}=&pxw+xyw-px\beta_{1} U_{u}^{*}+p\beta_{1}^{2}U_{s}^{*}U_{u}^{*}+y\beta_{1}^{2}U_{s}^{*}U_{u}^{*}-xw\beta_{2}U_{u}^{*}+x\beta_{1}\beta_{2}\left(U_{u}^{*} \right)^{2}-\beta_{1}^{2}\beta_{2}\left(U_{u}^{*} \right)^{2}U_{s}^{*}  \nonumber \\
&+\beta_{1}\beta_{2}\beta_{3} U_{u}^{*}U_{s}^{*}V^{*}-\beta_{4}\beta_{5} V^{*}\left(x+w-\beta_{1}U_{u}^{*} \right)-px\beta_{6}D^{*}-xy\beta_{6}D^{*}+x\beta_{2}\beta_{6}U_{u}^{*}D^{*},
   \nonumber\\
   &+\beta_{1}\beta_{2}\beta_{6}U_{u}^{*}U_{s}^{*}D^{*}-\beta_{3}\beta_{4}\beta_{6}U_{s}^{*}V^{*}+p\beta_{6}^{2}U_{s}^{*}D^{*}+x\beta_{6}^{2}U_{s}^{*}D^{*}+py\left(w+x-\beta_{1}U_{u}^{*}-\beta_{6}D^{*} \right) \nonumber\\
   &-p\beta{2}U_{u}^{*}\left(w+x-\beta_{1}U_{u}^{*}-\beta_{6}D^{*} \right)+\beta_{5}V^{*}D^{*}\left(\beta_{2}^{2}U_{u}^{*}+\beta_{4}\beta_{6} \right)+xw\delta+py\delta-x\delta\beta_{1}U_{u}^{*} \nonumber\\
   &+\beta_{1}^{2}\delta U_{u}^{*}U_{s}^{*}-p\beta_{2}\delta U_{u}^{*}-\beta_{4}\beta_{5}\delta V^{*}-x\beta_{6}\delta D^{*}+\beta_{6}^{2}\delta U_{u}^{*} D^{*}+p\delta\left(w+x-\beta_{1}U_{u}^{*}-\beta_{6}D^{*} \right)  \nonumber\\
   &+y\delta\left(w+x-\beta_{1}U_{u}^{*}-\beta_{6}D^{*} \right)-\beta_{2}\delta U_{u}^{*}\left(w+x-\beta_{1}U_{u}^{*}-\beta_{6}D^{*} \right)+p\beta_{4}\phi U_{s}^{*}+x\beta_{3}\phi U_{s}^{*} \nonumber\\
   &+\phi U_{s}^{*}\left(y\beta_{3}-\beta_{2}\beta_{3}U_{u}^{*}+\beta_{5}\beta_{6} D^{*} \right),  \nonumber\\
   c_{4}=&\beta_{2}^{2} \beta_{5} U_{u}^{*} D^{*} V^{*} w
- \beta_{1} \beta_{2}^{2} \beta_{5} U_{u}^{*2} D^{*} V^{*}
- \beta_{4} \beta_{5} V^{*} w x
+ \beta_{1} \beta_{4} \beta_{5} U_{u}^{*} V^{*} x \nonumber \\ 
& - \beta_{2}^{2} \beta_{5} \beta_{6} U_{u}^{*} D^{*2} V^{*}
+ \beta_{4} \beta_{5} \beta_{6} D^{*} V^{*} x
+ w x y \delta
- \beta_{1} U_{u}^{*} x y \delta
- \beta_{2} U_{u}^{*} w x \delta \nonumber \\
& + \beta_{1} \beta_{2} U_{u}^{*2} x \delta
+ \beta_{2}^{2} \beta_{5} U_{u}^{*} D^{*} V^{*} \delta
- \beta_{4} \beta_{5} V^{*} w \delta
- \beta_{4} \beta_{5} V^{*} x \delta
+ \beta_{1} \beta_{4} \beta_{5} U_{u}^{*} V^{*} \delta \nonumber \\
& - \beta_{6} D^{*} x y \delta
+ \beta_{2} \beta_{6} D^{*} U_{u}^{*} x \delta
+ \beta_{4} \beta_{5} \beta_{6} D^{*} V^{*} \delta
- U_{s}^{*} 
\Big(
- \beta_{6}^{2} D^{*} x \delta
+ \beta_{1} \beta_{2} U_{u}^{*2} (V^{*} \beta_{2} \beta_{3} + \beta_{1} \delta) \nonumber \\
& \quad - x y \beta_{3} \phi
+ z \beta_{4} \beta_{6} \phi
- D^{*} x \beta_{5} \beta_{6} \phi
- U_{u}^{*} (y \beta_{1}^{2} \delta
+ D^{*} \beta_{1} \beta_{2} \beta_{6} \delta
+ V^{*} (y \beta_{1} \beta_{2} \beta_{3} - \beta_{1}^{2} \beta_{4} \beta_{5}  \nonumber \\
& \quad + D^{*} \beta_{2}^{2} \beta_{3} \beta_{6}
- D^{*} \beta_{1} \beta_{2} \beta_{5} \beta_{6}
+ \beta_{1} \beta_{2} \beta_{3} \delta)
+ z \beta_{1} \beta_{2} \phi
- x \beta_{2} \beta_{3} \phi)
+ V^{*} \beta_{3} \beta_{4} (x \beta_{6} + \beta_{6} \delta + \beta_{5} \phi)
\Big) \nonumber \\
& + p 
\Big(
- D^{*} x y \beta_{6}
+ D^{*} U_{s}^{*} x \beta_{6}^{2}
+ x y \delta
- D^{*} x \beta_{6} \delta
- D^{*} y \beta_{6} \delta
+ D^{*} U_{s}^{*} \beta_{6}^{2} \delta
+ \beta_{1} \beta_{2} U_{u}^{*2} (x - U_{s}^{*} \beta_{1} + \delta) \nonumber \\
& \quad + w ((y - U_{u}^{*} \beta_{2}) \delta + x (y - U_{u}^{*} \beta_{2} + \delta))
+ U_{s}^{*} x \beta_{3} \phi
+ U_{s}^{*} y \beta_{3} \phi
+ D^{*} U_{s}^{*} \beta_{5} \beta_{6} \phi \nonumber \\
& \quad + U_{u}^{*} (-y \beta_{1} \delta
+ D^{*} \beta_{2} \beta_{6} \delta
- x (y \beta_{1} - D^{*} \beta_{2} \beta_{6} + (\beta_{1} + \beta_{2}) \delta)
\nonumber\\&+ U_{s}^{*} (y \beta_{1}^{2} + D^{*} \beta_{1} \beta_{2} \beta_{6} + \beta_{1}^{2} \delta - \beta_{2} \beta_{3} \phi))
\Big) , \nonumber \\
   c_{5}=& \Big[ 
(-y +\beta_{2} U_{u}^{*} ) 
\Big( 
- \beta_{1} \beta_{2} \beta_{3}  U_{s}^{*} U_{u}^{*} V^{*}
- p ( x w-x \beta_{1} U_{u}^{*}  + \beta_{1}^{2} U_{s}^{*} U_{u}^{*}  -x \beta_{6} D^{*} )
\Big) \nonumber \\
& +\beta_{6} U_{s}^{*}  
\Big(
-x \beta_{3} \beta_{4} V^{*}  
+ D^{*} \left(p \beta_{1} \beta_{2}  U_{u}^{*} 
+  \beta_{2}^{2} \beta_{3} U_{u}^{*} V^{*} 
+ p x \beta_{6}\right)
\Big) \nonumber \\
& -\beta_{5} V^{*}  
\Big(
(xw + \beta_{1} U_{u}^{*} (-x +\beta_{1} U_{s}^{*} )) \beta_{4} 
+ \beta_{2}^{2} \beta_{6}\left(D^{*}\right)^{2} U_{u}^{*}  \nonumber \\
& \quad - D^{*}(-\beta_{1} \beta_{2}^{2} (U_{u}^{*})^{2} 
+ x \beta_{4} \beta_{6} 
+\beta_{2} U_{u}^{*} (w \beta_{2} -\beta_{1} \beta_{6} U_{s}^{*} ))
\Big)
\Big] \delta \nonumber \\
& + \Big[\beta_{2}
U_{s}^{*} U_{u}^{*}  
\Big(
y z \beta_{1} 
-z \beta_{1} \beta_{2} U_{u}^{*}  
+\beta_{2} \beta_{3} \beta_{5}  D^{*} V^{*} 
- \beta_{1} \beta_{5}^{2} D^{*} V^{*} 
+ z \beta_{2} \beta_{6} D^{*}
\Big) \nonumber \\
& \quad +x U_{s}^{*}  
\Big(
- \beta_{4}(V^{*} \beta_{3} \beta_{5} + z \beta_{6}) 
+ p (y \beta_{3} -\beta_{2} \beta_{3} U_{u}^{*}  +\beta_{5}\beta_{6} D^{*}  )
\Big)
\Big] \phi , \nonumber
    \end{align}
and $x,y,z,w$ and $p$ are given by
\begin{align*}
    x&=\alpha_{1}+\beta_{1}U_{s}^{*}+\beta_{2}D^*,\\
        y&=\alpha_{4}+\beta_{5}V^{*}+\beta_{6}U_{s}^{*},\\
        z&=\beta_{3}U_{s}^{*}+\beta_{5}D^{*},\\
        w&=\alpha_{2}+\beta_{3}V^{*},\\
        p&=\alpha_{3}+\beta_{4}.
\end{align*}
To check the stability of the critical point $P_{2}$, we use the Routh-Hurwitz criterion. The RH (Routh-Hurwitz) matrix is given by
\begin{align*}
     RH_{5}&=\begin{pmatrix}
            c_{1}&1&0&0&0 \\
            c_{3}&c_{2}&c_{1}&1&0\\
            c_{5}&c_{4}&c_{3}&c_{2}&c_{1} \\
            0& 0&c_{5}&c_{4}&c_{3} \\
            0&0&0&0&c_{5}
        \end{pmatrix}.
\end{align*}

If each $c_{i},~ (i=1,2,3,4,5)$ are positive, also
\begin{align}
    &c_{1}c_{2}-c_{3}>0,\notag\\
    &c_{1}c_{2}c_{3}- c_{3}^{2}-c_{1}^{2}c_{4}+c_{1} c_{5}>0,\notag \\
    &c_1 c_2 c_3 c_4 - c_3^2 c_4 - c_1^2 c_4^2 - c_1 c_2^2 c_5 + c_2 c_3 c_5 + 2 c_1 c_4 c_5 - c_5^2
>0,\notag\\
&c_5 \left( c_1 c_2 c_3 c_4 - c_3^2 c_4 - c_1^2 c_4^2 - c_1 c_2^2 c_5 + c_2 c_3 c_5 + 2 c_1 c_4 c_5 - c_5^2 \right)>0,\notag
\end{align}
then the critical point $P_{2}$ is locally asymptotically stable.

\end{enumerate}

\section{Numerical Simulation}\label{S6}
This section consists of two parts: numerical solution plots illustrating the theoretical results and qualitative behaviour of the model, and a sensitivity analysis examining the impact of parameter variations on the state variables.

\subsection{Solution Plots} \

\parindent=0mm \vspace{.01in}
For numerical simulation, we have used MATLAB 2022a by considering the following parameters
\begin{align*}
    &\Gamma = 5000, ~ \alpha_{1} = 0.7,~ \alpha_{2} = 0.1,~ \alpha_{3} =  0.5,~ \alpha_{4} = 1.2, ~ \phi = 1.1, ~ \delta = 0.5,\\
    &\beta_{1} = 1.105,~ \beta_{2} = 1.0,~ \beta_{3} = 0.9,~ \beta_{4} = 0.05,~\beta_{5} = 0.7,~ \beta_{6} = 0.9.
\end{align*}
Therefore, the equilibrium values of the model \eqref{Eq1} corresponding to the above parameters are given by
\begin{align*}
    &P_{0}: U_{u}^{*}=7142, ~ U_{s}^{*}=0, ~ E^{*}=0, ~ D^{*}=0, ~ V^{*}=0,\\
    &P_{1}: U_{u}^{*}=1, ~ U_{s}^{*}=0, ~ E^{*}=0, ~ D^{*}=4166, ~ V^{*}=0,\\
    &P_{2}: U_{u}^{*}=76, ~ U_{s}^{*}=48, ~ E^{*}=9855, ~ D^{*}=12, ~ V^{*}=105.
\end{align*}
The corresponding eigenvalues are given by 
\begin{align*}
    & P_{0} : 7891.81,~ 7140.8,~ -0.7,~ -0.55,~ -0.5,\\
    & P_{1} : -4165.7,~ 3750.41,~ -1.20012,~ -0.55,~ -0.5,\\
    & P_{2} : -35.3691 + 68.8336 i,~ -35.3691 - 68.8336 i,~ -36.0122,~ -1.2591,~ -0.500451.
\end{align*}
Since all eigenvalues corresponding to the equilibrium point $P_{2}$ have negative real parts, the critical point $P_{2}$ is locally asymptotically stable.
\begin{figure}[H]
\begin{center}
\includegraphics[width=0.9\textwidth]{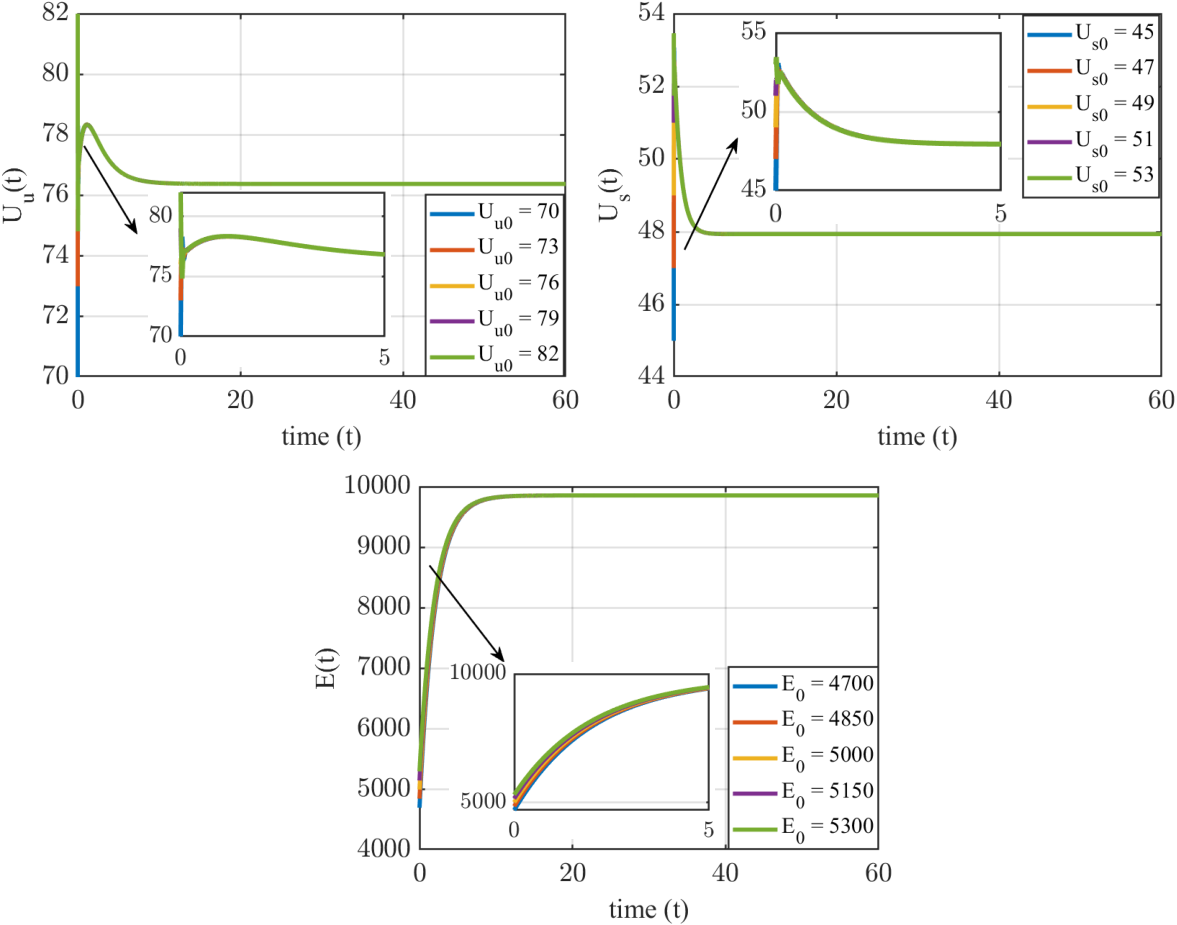}
\includegraphics[width=0.9\textwidth]{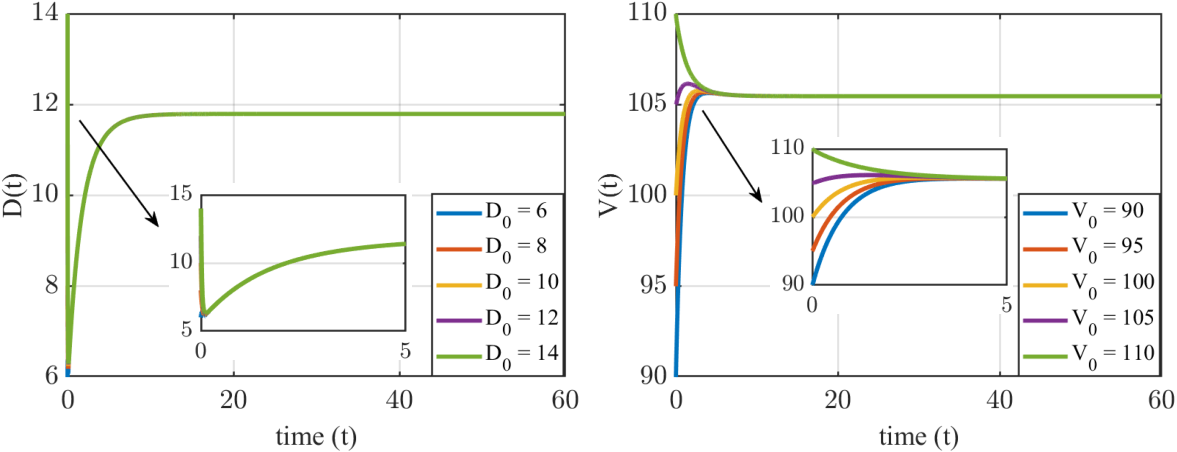}
\end{center}
\caption{Numerical simulation of the solution plots of $U_{u}(t), U_{s}(t), E(t), D(t),$ and $V(t)$ with respect to time (in days) with various initial conditions $(U_{u_{0}}, U_{s_{0}}, E_{0}, D_{0}, V_{0})$.}
\label{fig:figure2}
\end{figure}
From Fig.~\ref{fig:figure2}, it is observed that for different initial conditions, all state variables converge to their equilibrium values, indicating that the equilibrium point is locally asymptotically stable.
\subsection{Sensitivity Analysis}\

\parindent=8mm \vspace{.01in}
In this subsection, we analyze the sensitivity of the proposed model \eqref{Eq1} to assess how variations in key parameters influence the system dynamics and the behavior of the state variables. Specifically, we examine the response of the solution components $\big(U_{u}(t), U_{s}(t), E(t), D(t), V(t)\big)$ under $\pm 20\%$ variations in each principal parameter $\beta_i$, thereby evaluating the impact of parameter changes and the robustness of the model.

\begin{figure}[H]
\begin{center}
\includegraphics[width=0.9\textwidth]{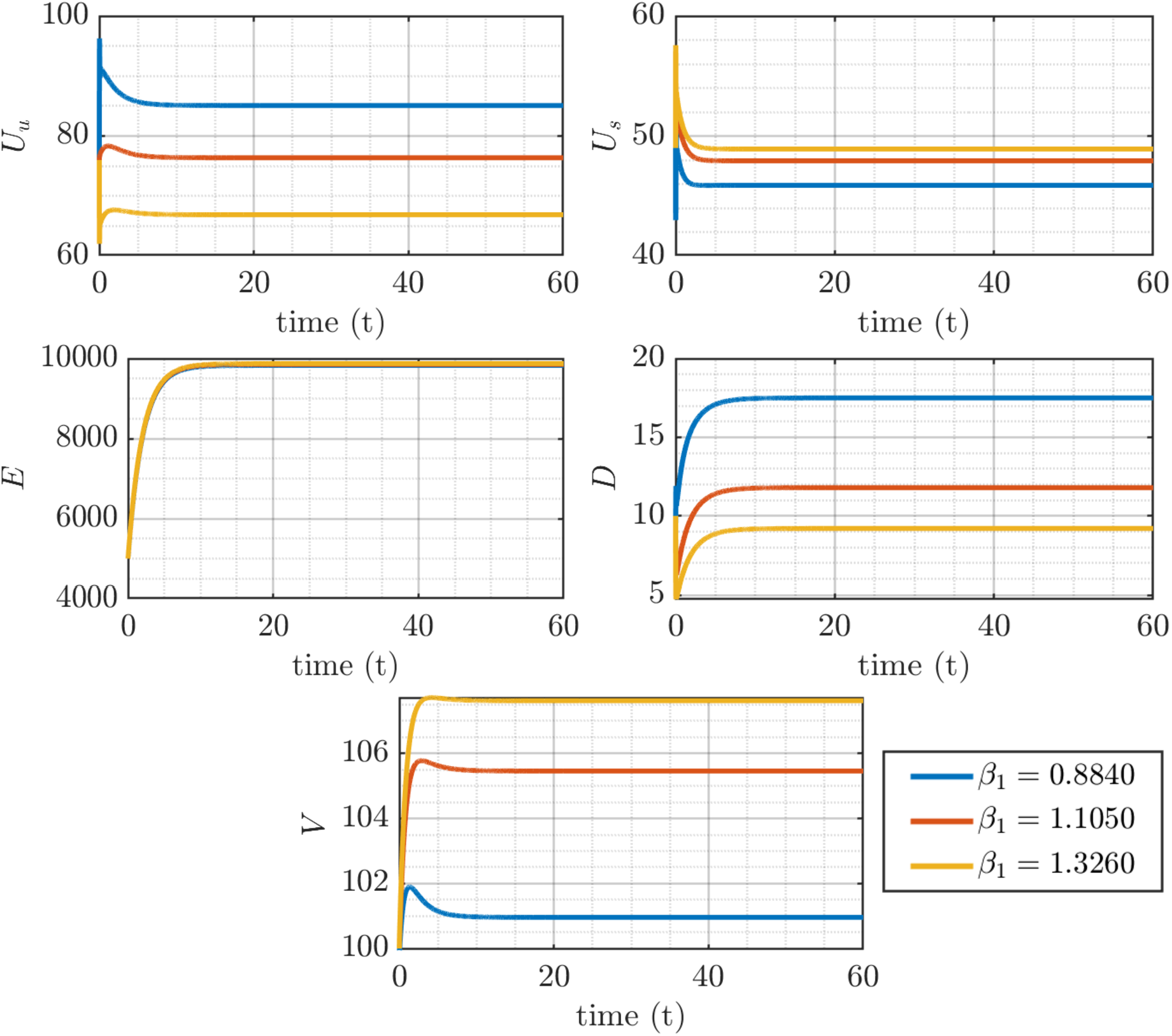}
\end{center}
\caption{Temporal dynamics of the $U_{u}(t), U_{s}(t), E(t), D(t),$ and $V(t)$ with respect to time (in days) with various values of $\beta_{1}.$}
\label{fig:figure3}
\end{figure}
From Fig~\ref{fig:figure3}, we observe:
\begin{enumerate}[(a)]
    \item For different values of the parameter $\beta_{1}$, all state variables asymptotically converge to a stable equilibrium point as time progresses.
    \item An increase in the parameter $\beta_{1}$ leads to a decrease in the number of unskilled unemployed individuals, whereas a reduction of $\beta_{1}$ results in an increase in this population.
    \item Conversely, as $\beta_{1}$ increases, the number of skilled unemployed individuals rises, and it declines when $\beta_{1}$ is reduced.
    \item The employed population $E(t)$ converges to nearly the same steady-state level for all considered values of $\beta_{1}$, indicating that employment is only weakly sensitive to variations in this parameter.
    \item The number of discouraged individuals decreases with an increase in $\beta_{1}$ and increases when $\beta_{1}$ decreases.
    \item An increase in $\beta_{1}$ results in a higher number of job vacancies, while a decrease in $\beta_{1}$ leads to a reduction in vacancies.
\end{enumerate}

\begin{figure}[H]
\begin{center}
\includegraphics[width=\textwidth]{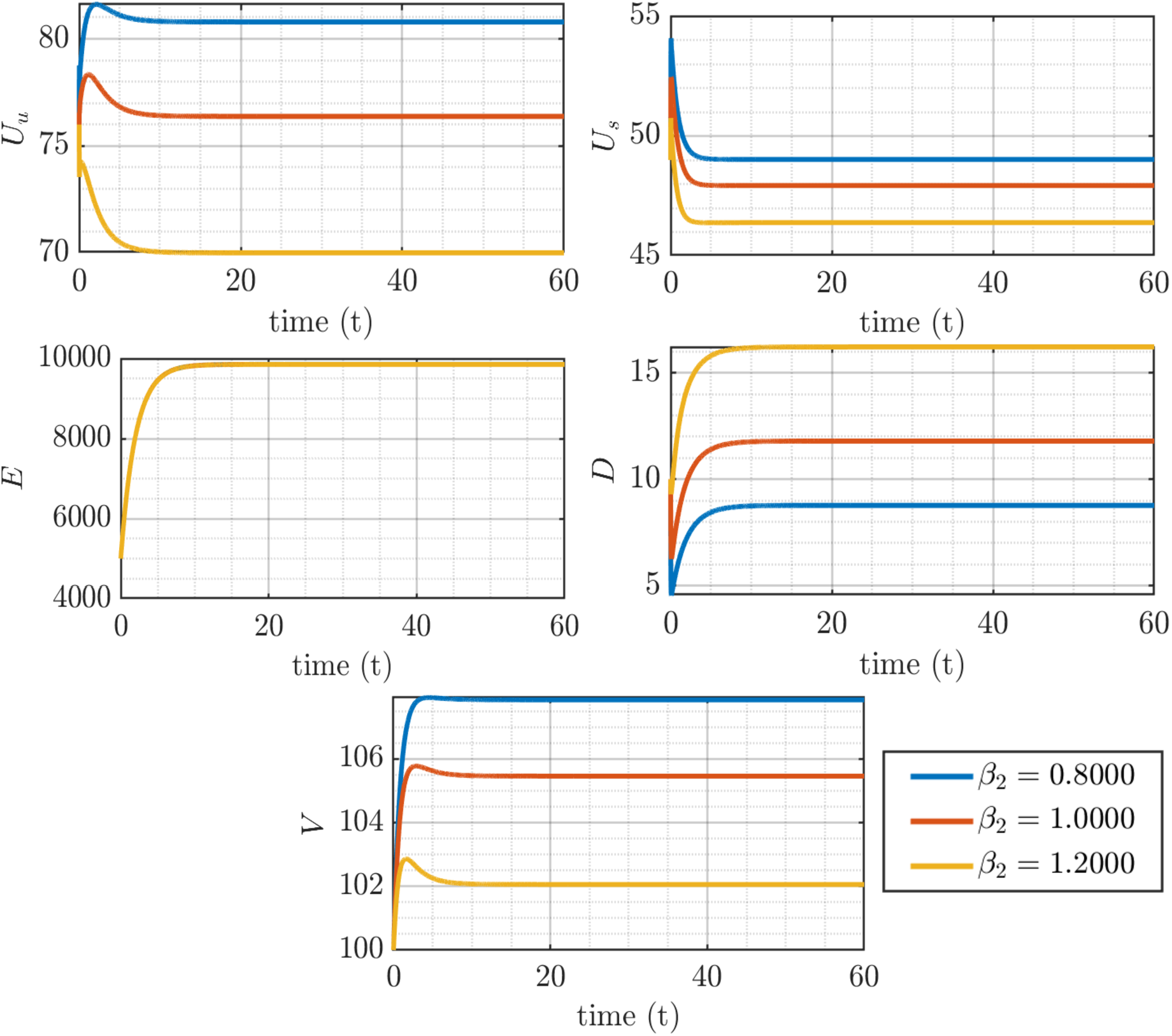}
\end{center}
\caption{Temporal dynamics of the $U_{u}(t), U_{s}(t), E(t), D(t),$ and $V(t)$ with respect to time (in days) with different values of $\beta_{2}.$}
\label{fig:figure4}
\end{figure}
From the above Fig.~\ref{fig:figure4} we can observe the following:
\begin{enumerate}[(a)]
    \item For different values of $\beta_{2}$, each state variables are involved in the model \eqref{Eq1}, which converges to a steady state, indicating the local stability of the state variable with the variation of the parameter $\beta_{2}$.
    \item If $\beta_{2}$ decreases, then the number of unskilled unemployed increases, and if $\beta_{2}$ is large, then the number of unskilled unemployed almost diminishes.
    \item There is almost no effect on the employed class for the different values of $\beta_{2}$. This leads to the fact that the number of employed persons is almost independent of different values of $\beta_{2}$.
    \item If $\beta_{2}$ increases discouraged individuals increases and decreases if $\beta_{2}$ decreases.
    \item If $\beta_{2}$ increases, then the vacancies reduces and vice versa. This indicates that a large $\beta_{2}$ enhances the vacancy-filling efficiency by improving the connection between job seekers and available vacancies.
\end{enumerate}
\begin{figure}[H]
\begin{center}
\includegraphics[width=\textwidth]{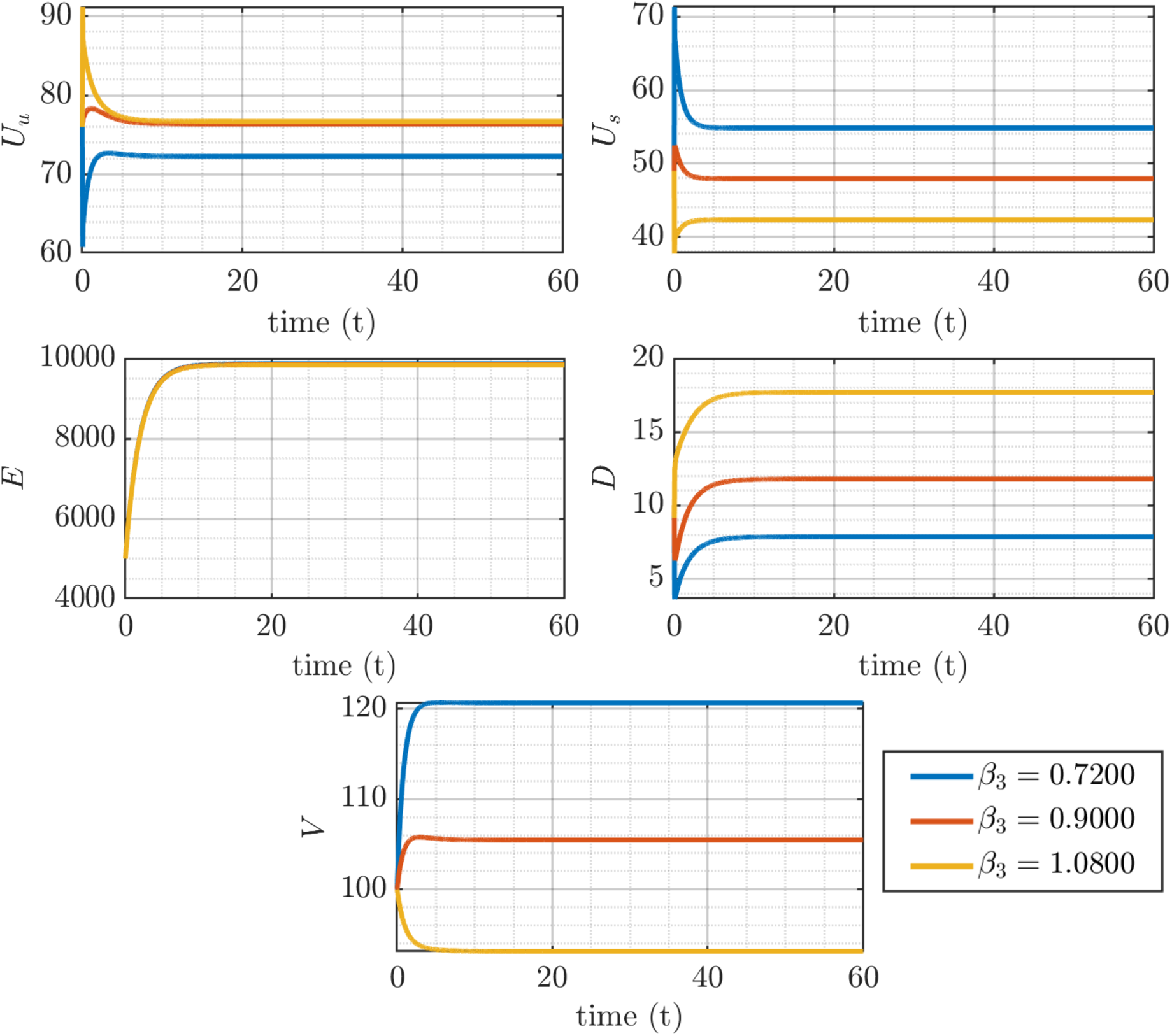}
\end{center}
\caption{Temporal dynamics of the $U_{u}(t), U_{s}(t), E(t), D(t),$ and $V(t)$ with respect to time (in days) with different values of $\beta_{3}.$}
\label{fig:figure5}
\end{figure}
From the above Fig.~\ref{fig:figure5} we can observe the following:
\begin{enumerate}[(a)]
    \item For different values of $\beta_{3}$, each state variables are involved in the model \eqref{Eq1} converges to a steady state, which indicates the local stability of the state variable with the variation of the parameter $\beta_{3}$.
    \item As $\beta_3$ increases to a higher equilibrium level of unskilled unemployed individuals $U_u(t)$, while the steady-state level of skilled unemployed individuals $U_s(t)$ decreases.
    \item The discouraged population $D(t)$ increases with $\beta_3$, whereas the number of vacancies $V(t)$ decreases significantly, indicating stronger vacancy absorption.
    \item There is almost no effect on the employed class for the different values of $\beta_{3}$. This leads to the fact that the number of employed persons is almost independent of distinct values of $\beta_{3}$.

\end{enumerate}

\begin{figure}[H]
\begin{center}
\includegraphics[width=\textwidth]{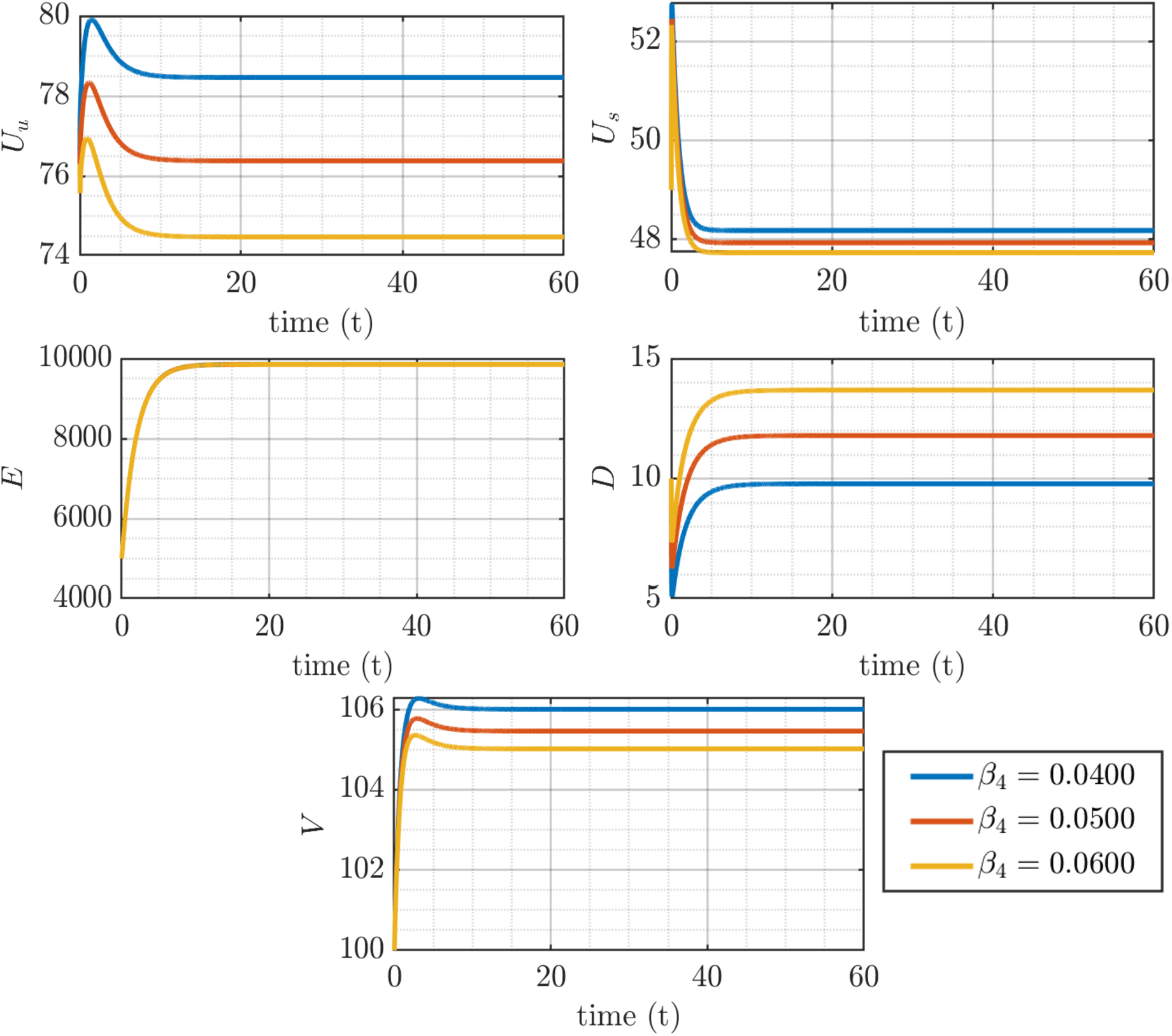}

\end{center}
\caption{Temporal dynamics of the $U_{u}(t), U_{s}(t), E(t), D(t),$ and $V(t)$ with respect to time (in days) with different values of $\beta_{4}.$}
\label{fig:figure6}
\end{figure}
From the above Fig.~\ref{fig:figure6} we can observe the following:
\begin{enumerate}[(a)]
    \item For different values of $\beta_{4}$, each state variables are involved in the model \eqref{Eq1} converges to a steady state, which shows that the local stability of the state variable with the variation of the parameter $\beta_{4}$.
    \item As $\beta_4$ increases, then equilibrium level of unskilled unemployed individuals $U_u(t)$ decreases and vice versa. Similarly, for the steady-state level of skilled unemployed individuals, $U_s(t)$ decreases when $\beta_{4}$ increases.
     \item There is almost no effect on the employed class for the different values of $\beta_{4}$. This leads to the fact that the number of employed persons is almost independent of various values of $\beta_{4}$.
    \item If the rate $\beta_{4}$ increases, then the discouraged population increases, and decreases when $\beta_{4}$ decreases. While the opposite phenomenon occurs for the vacancies, that is, if $\beta_{4}$ increases, then vacancies decrease and vice versa.

\end{enumerate}

\begin{figure}[H]
\begin{center}
\includegraphics[width=\textwidth]{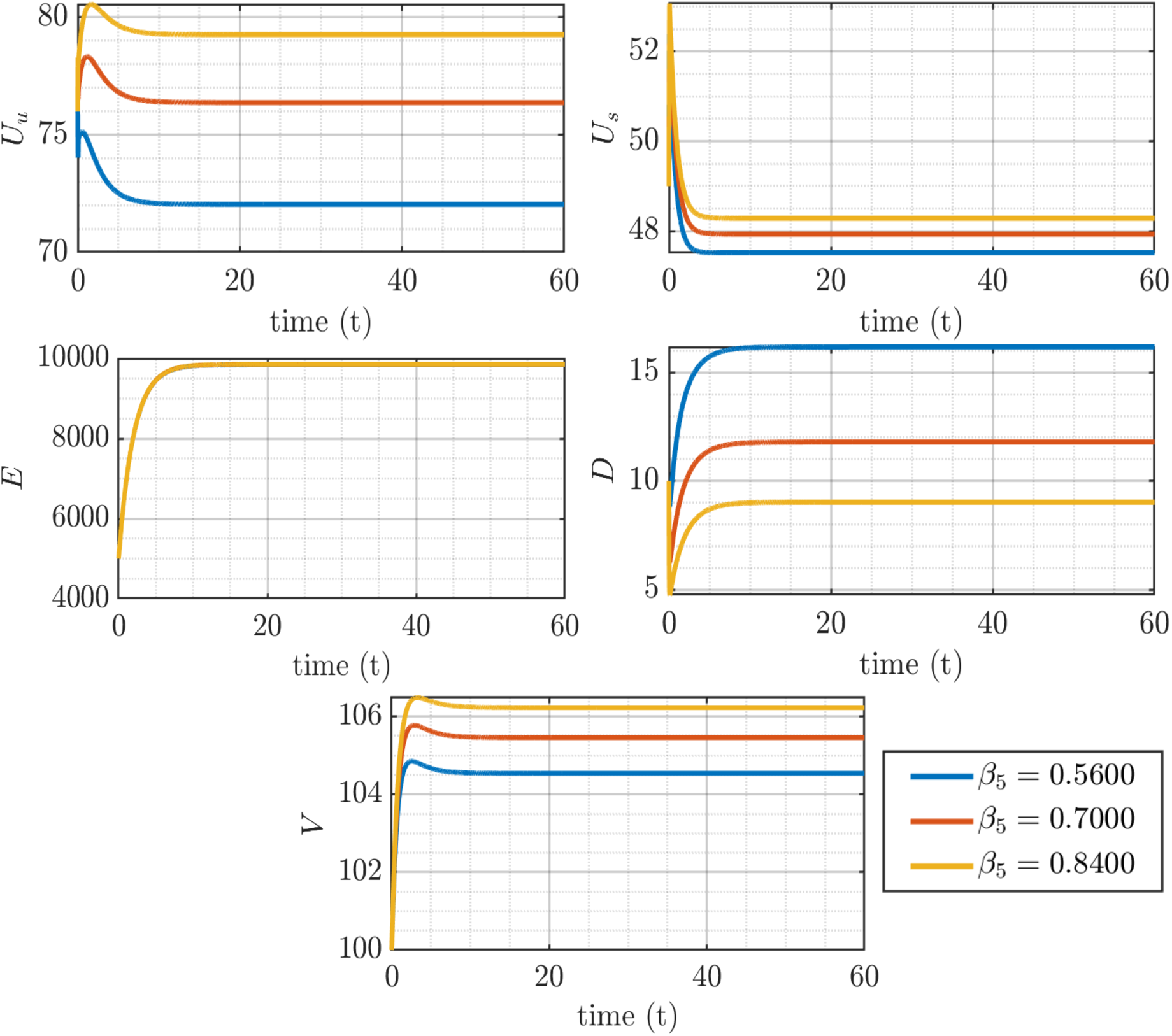}
\end{center}
\caption{Temporal dynamics of the $U_{u}(t), U_{s}(t), E(t), D(t),$ and $V(t)$ with respect to time (in days) with different values of $\beta_{5}.$}
\label{fig:figure7}
\end{figure}
From above Fig.~\ref{fig:figure7} the followings can be observed:
\begin{enumerate}[(a)]
    \item The state variables involved in the proposed model \eqref{Eq1} converge to a specific steady state for various values of $\beta_{5}$, which signifies the local stability of the state variable with the variation of the parameter $\beta_{5}$.
    \item As $\beta_{5}$ (that is, the rate at which discouraged individuals are getting jobs and moving to the employed class) increases, both unskilled and skilled unemployed persons increase and decrease when $\beta_{5}$ increases.
    \item There is almost no effect on the employed class for the different values of $\beta_{5}$. This leads to the fact that the number of employed persons is almost independent of distinct values of $\beta_{5}$.
    \item If the rate $\beta_{5}$ increases, then the vacancies increases, and decreases when $\beta_{5}$ decreases. While the opposite phenomenon occurs for the discouraged population, that is, if $\beta_{5}$ increases, then vacancies decrease and vice versa.
\end{enumerate}

\begin{figure}[H]
\begin{center}
\includegraphics[width=0.9\textwidth]{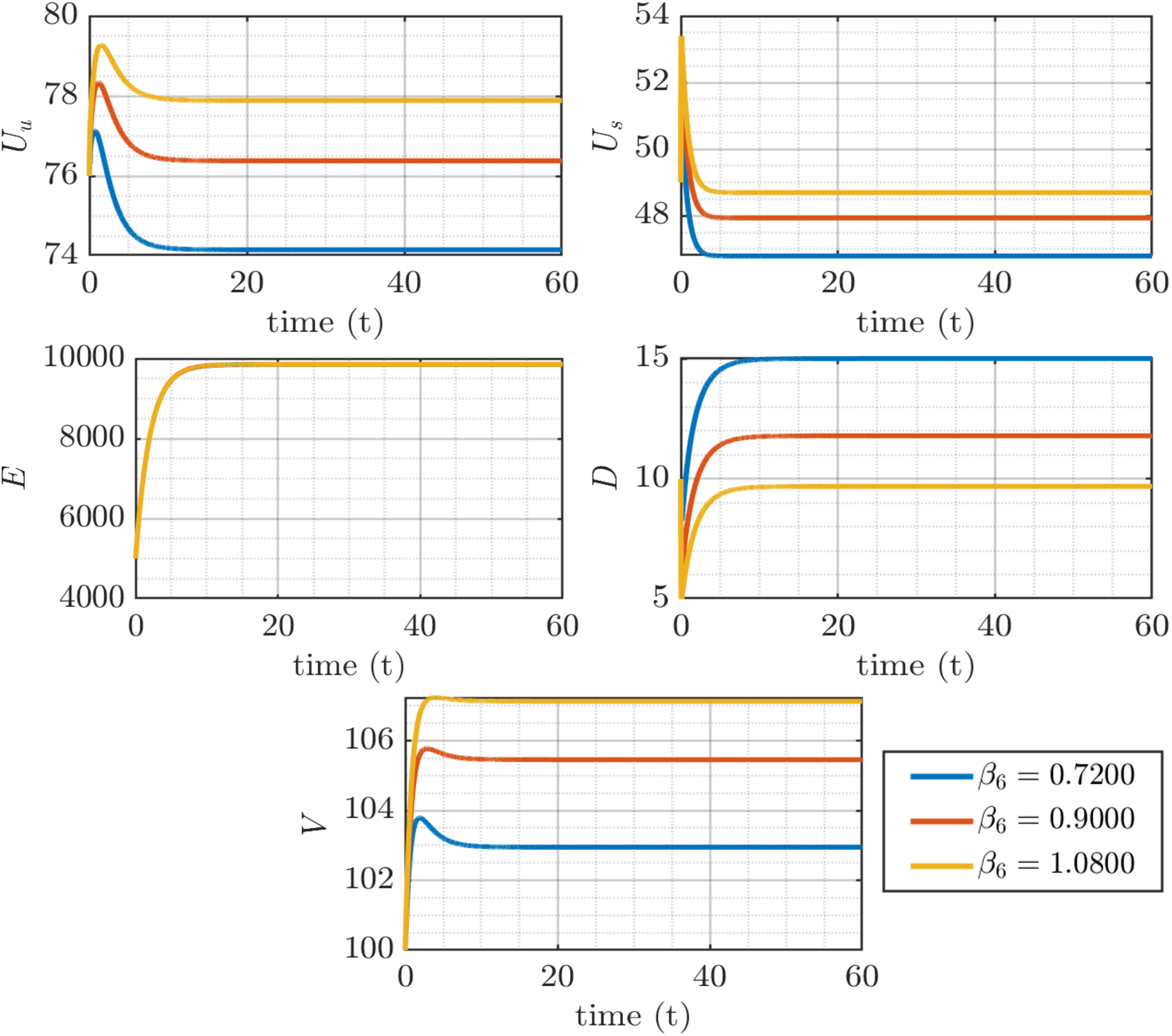}
\end{center}
\caption{Temporal dynamics of the $U_{u}(t), U_{s}(t), E(t), D(t),$ and $V(t)$ with respect to time (in days) with different values of $\beta_{6}.$}
\label{fig:figure8}
\end{figure}
One can observe the following from the above Fig.~\ref{fig:figure8}:
\begin{enumerate}[(a)]
    \item For different values of $\beta_{6}$, all state variables involved in model~\eqref{Eq1} converge to a steady state, indicating the local stability of the system with respect to variations in the parameter $\beta_{6}$.
    \item As $\beta_{6}$ (that is, the rate at which discouraged individuals are moving to the skilled unemployed class) increases, both unskilled and skilled unemployed persons increase and decrease when $\beta_{6}$ increases.
    \item There is almost no effect on the employed class for the different values of $\beta_{6}$. This leads to the fact that the number of employed persons is almost independent of various values of $\beta_{6}$.
    \item If the rate $\beta_{6}$ increases, then the vacancies increases, and decreases when $\beta_{5}$ decreases. While the opposite phenomenon occurs for the discouraged population, that is, if $\beta_{6}$ increases, then vacancies decrease and vice versa.
\end{enumerate}

\section{Conclusion}\label{S7}
\parindent=0mm\vspace{0.1in}
	In this study, a nonlinear mathematical model is considered and analyzed to determine the effect of skill development on the populations of unskilled, skilled unemployed individuals, and those who are discouraged. In the modeling process, we have considered five key variables: unskilled unemployed individuals, skilled unemployed individuals, employed individuals, discouraged individuals, and vacancies. In our proposed model we have obtained three equilibrium points $P_{0}$, $P_{1}$, and $P_{2}$. Furthermore, the stability of the critical points has been investigated by computing the eigenvalues for the equilibrium points $P_{0}$ and $P_{1}$, and by applying the Routh-Hurwitz stability criterion for the equilibrium point $P_{2}$. In addition, numerical experiments have been conducted to present solution trajectories and perform a sensitivity analysis of the dynamical variables with respect to variations in several key parameters.

    \parindent=8mm \vspace{.1in}
     The model analysis indicates that when unskilled individuals acquire skills at a higher rate through the Indian government's skill development programs, the number of discouraged individuals decreases. Moreover, as the pool of skilled unemployed individuals expands, the government responds by creating additional job vacancies specifically targeted at skilled workers. Furthermore, if a significant proportion of unskilled unemployed individuals lose hope of securing employment and transition into the discouraged worker category, then the number of skilled individuals decreases, which in turn leads to a reduction in available job vacancies. If a larger number of skilled individuals secure employment, the number of vacancies decreases, and as a consequence, discouragement among individuals increases due to the perception of reduced job opportunities. 

   \parindent=8mm \vspace{.1in}
     Moreover, when employed individuals lose their jobs or employers terminate their employment and they subsequently move into the discouraged class, the number of unskilled unemployed individuals decreases, as does the number of skilled unemployed individuals, which consequently leads to a reduction in job vacancies. If discouraged individuals obtain employment and transition into the employed class, this outcome motivates unskilled individuals to acquire skills, which subsequently leads to an increase in job vacancies. In addition, to discourage individuals from acquiring skills and entering the category of the skilled unemployed, the Government must introduce new vacancies for skilled unemployed individuals.
	\section*{Declarations}
	
	\noindent\textbf{Data availability:} Data sharing does not apply to this article as no datasets were generated or analysed during the current study.
	
	\vspace{1mm}
	\noindent\textbf{Conflict of interest:} The authors declare that they have no competing interests.
	
	\vspace{1mm}
	\noindent\textbf{Author contributions:} All authors contributed equally to this work.
	
	\vspace{1mm}
	\noindent\textbf{Acknowledgments:} The second and third authors acknowledge financial support from the University Grants Commission (UGC) under Ref. No.231620209305, 221610002999.

\bibliographystyle{unsrt}
\bibliography{Refer}

@article{kidwai2015psychological,
  title={Psychological impacts of unemployment-evidence from the literature},
  author={Kidwai, Asfiya and Sarwar, Zain},
  journal={Review of Integrative Business and Economics Research},
  volume={4},
  number={3},
  pages={141},
  year={2015},
  publisher={Society of Interdisciplinary Business Research}
}

@article{murphy1999effect,
  title={The effect of unemployment on mental health},
  author={Murphy, Gregory C and Athanasou, James A},
  journal={Journal of Occupational and organizational Psychology},
  volume={72},
  number={1},
  pages={83--99},
  year={1999},
  publisher={Wiley Online Library}
}

@article{nikolopoulos2003model,
  title={A model for housing allocation of a homeless population due to a natural disaster},
  author={Nikolopoulos, Christos V and Tzanetis, DE},
  journal={Nonlinear Analysis: Real World Applications},
  volume={4},
  number={4},
  pages={561--579},
  year={2003},
  publisher={Elsevier}
}

@article{misra2011mathematical,
  title={A mathematical model for unemployment},
  author={Misra, AK and Singh, Arvind K},
  journal={Nonlinear Analysis: Real World Applications},
  volume={12},
  number={1},
  pages={128--136},
  year={2011},
  publisher={Elsevier}
}

@article{misra2013delay,
  title={A delay mathematical model for the control of unemployment},
  author={Misra, AK and Singh, Arvind K},
  journal={Differential Equations and Dynamical Systems},
  volume={21},
  number={3},
  pages={291--307},
  year={2013},
  publisher={Springer}
}

@article{pathan2015mathematical,
  title={A mathematical model for unemployment with effect of self employment},
  author={Pathan, GN and Bhathawala, PH},
  journal={IOSR Journal of Mathematics},
  volume={11},
  number={1},
  pages={37--43},
  year={2015}
}

@article{pathan2016unemployment,
  title={Unemployment-discussion with a mathematical model},
  author={Pathan, GN and Bhathawala, PH},
  journal={IJBMER},
  volume={3},
  number={2016},
  pages={19--27},
  year={2016}
}

@article{munoli2016optimal,
  title={Optimal control analysis of a mathematical model for unemployment},
  author={Munoli, Surekha B and Gani, Shankrevva},
  journal={Optimal Control Applications and Methods},
  volume={37},
  number={4},
  pages={798--806},
  year={2016},
  publisher={Wiley Online Library}
}

@article{munoli2017mathematical,
  title={A mathematical approach to employment policies: an optimal control analysis},
  author={Munoli, Surekha B and Gani, S and Gani, Shrishail R},
  journal={International Journal of Statistics and Systems},
  volume={12},
  number={3},
  pages={549--565},
  year={2017}
}

@article{pathan2017mathematical,
  title={A mathematical model for unemployment-taking an action without delay},
  author={Pathan, Gulbanu and Bhathawala, PH},
  journal={Advances in Dynamical Systems and Applications},
  volume={12},
  number={1},
  pages={41--48},
  year={2017}
}

@article{galindro2017simple,
  title={A simple mathematical model for unemployment: a case study in Portugal with optimal control},
  author={Galindro, Anibal and Torres, Delfim FM},
  journal={arXiv preprint arXiv:1801.00058},
  year={2017}
}

@article{pathan2017mathematical1,
  title={Mathematical model for unemployment control-a numerical study},
  author={Pathan, Gulbanu and Bhathawala, PH},
  journal={International Journal of Mathematics Trends and Technology-IJMTT},
  volume={49},
  year={2017},
  publisher={Seventh Sense Research Group{\textregistered}}
}

@article{misra2022modeling,
  title={Modeling the role of skill development to control unemployment},
  author={Misra, AK and Singh, Arvind K and Singh, Pushkar Kumar},
  journal={Differential Equations and Dynamical Systems},
  volume={30},
  number={2},
  pages={397--409},
  year={2022},
  publisher={Springer}
}

@article{harding2018dynamic,
  title={A dynamic model of unemployment with migration and delayed policy intervention},
  author={Harding, Liliana and Neam{\c{t}}u, Mihaela},
  journal={Computational Economics},
  volume={51},
  number={3},
  pages={427--462},
  year={2018},
  publisher={Springer}
}

@article{sundar2018does,
  title={Does unemployment induce crime in society? A mathematical study},
  author={Sundar, Shyam and Tripathi, Agraj and Naresh, Ram},
  journal={American Journal of Applied Mathematics and Statistics},
  volume={6},
  number={2},
  pages={44--53},
  year={2018}
}

@article{al2018unemployment,
  title={Unemployment model},
  author={Al-Maalwi, Raneah M and Ashi, HA and Al-sheikh, Sarah},
  journal={Applied Mathematical Sciences},
  volume={12},
  number={21},
  pages={989--1006},
  year={2018}
}

@article{singh2020combating,
  title={Combating unemployment through skill development},
  author={Singh, Arvind Kumar and Singh, Pushkar Kumar and Misra, Arvind Kumar},
  journal={Nonlinear Analysis: Modelling and Control},
  volume={25},
  number={6},
  pages={919--937},
  year={2020}
}

@article{al2021mathematical1,
  title={A mathematical model of unemployment with the effect of limited jobs},
  author={Al-Sheikh, Sarah and Al-Maalwi, Raneah and Ashi, HA},
  journal={Comptes Rendus. Mathematique},
  volume={359},
  number={3},
  pages={283--290},
  year={2021}
}

@inproceedings{verma2021getting,
  title={Getting a job is a tough job: A mathematical model},
  author={Verma, Harendra and Mathur, Neha and Mishra, Vishnu Narayan and Mathur, Pankaj},
  booktitle={AIP conference proceedings},
  volume={2364},
  pages={020027},
  year={2021},
  organization={AIP Publishing LLC}
}

@article{al2021mathematical,
  title={Mathematical modeling and parameter estimation of unemployment with the impact of training programs},
  author={Al-Maalwi, Raneah and Al-Sheikh, Sarah and Ashi, Hala A and Asiri, Sharefa},
  journal={Mathematics and Computers in Simulation},
  volume={182},
  pages={705--720},
  year={2021},
  publisher={Elsevier}
}

@article{ashi2022study,
  title={Study of the unemployment problem by mathematical modeling: Predictions and controls},
  author={Ashi, HA and Al-Maalwi, Raneah M and Al-Sheikh, Sarah},
  journal={The Journal of Mathematical Sociology},
  volume={46},
  number={3},
  pages={301--313},
  year={2022},
  publisher={Taylor \& Francis}
}

@article{singh2023modeling,
  title={Modeling the effect of unemployment augmented industrialization on the control of unemployment},
  author={Singh, Pushkar Kumar and Lata, Kusum and Singh, Arvind Kumar and Misra, Arvind Kumar},
  journal={Environment, Development and Sustainability},
  volume={25},
  number={1},
  pages={587--600},
  year={2023},
  publisher={Springer}
}

@article{verma2023modeling,
  title={Modeling of the chaotic situation in the recruitment processes},
  author={Verma, Harendra and Mishra, Vishnu Narayan and Mathur, Pankaj},
  journal={Advances in Computational Intelligence},
  volume={3},
  number={4},
  pages={17},
  year={2023},
  publisher={Springer}
}

@inproceedings{mamo2024modelling,
  title={Modelling and analysis of the impact of corruption on economic growth and unemployment},
  author={Mamo, Dejen Ketema and Ayele, Enat Agachew and Teklu, Shewafera Wondimagegnhu},
  booktitle={Operations Research Forum},
  volume={5},
  pages={36},
  year={2024},
  organization={Springer}
}

@article{ifeacho2025mathematical,
  title={Mathematical Modeling of Economic Growth, Corruption, Employment and Inflation},
  author={Ifeacho, Ogochukwu and Gonz{\'a}lez-Parra, Gilberto},
  journal={Mathematics},
  volume={13},
  number={7},
  pages={1102},
  year={2025},
  publisher={MDPI AG}
}

@article{misra2025graph,
  title={Graph-theoretic approach and bifurcation analysis of skill acquisition and its effect on unemployment},
  author={Misra, AK and Upadhayay, Ashutosh and Kumari, Mamta and Maurya, Jyoti},
  journal={Chaos: An Interdisciplinary Journal of Nonlinear Science},
  volume={35},
  number={2},
  year={2025},
  publisher={AIP Publishing}
}

@article{blustein2020unemployment,
  title={Unemployment in the time of COVID-19: A research agenda},
  author={Blustein, David L and Duffy, Ryan and Ferreira, Joaquim A and Cohen-Scali, Valerie and Cinamon, Rachel Gali and Allan, Blake A},
  journal={Journal of vocational behavior},
  volume={119},
  pages={103436},
  year={2020},
  publisher={Elsevier}
}

@article{mutascu2021artificial,
  title={Artificial intelligence and unemployment: New insights},
  author={Mutascu, Mihai},
  journal={Economic Analysis and Policy},
  volume={69},
  pages={653--667},
  year={2021},
  publisher={Elsevier}
}

@inproceedings{sirghi2014dynamic,
  title={A dynamic model for unemployment control with distributed delay, Mathematical Methods in finance and business Administration},
  author={Sirghi, N and Neamtu, M and Deac, D},
  booktitle={Proceeding of the International Business Administration conference, Tenerife Spain},
  volume={1},
  pages={42--48},
  year={2014}
}
\pagebreak

\end{document}